\documentclass[12pt,journal,final,onecolumn]{IEEEtran}
\usepackage[tight,footnotesize,center]{subfigure}
\usepackage{amsmath,amsfonts,amssymb,bbm}
\usepackage[footnotesize]{caption}
\usepackage{graphicx,xcolor}
\usepackage{multirow,bigdelim,longtable}
\usepackage{amsthm,bm}
\usepackage{hhline}
\usepackage[nobreak]{cite}
\usepackage{tikz,pgfplots}
\usepackage{flushend}
\usetikzlibrary{intersections,calc,patterns,arrows}
\theoremstyle{definition}
\newtheorem{thm}{Theorem}
\newtheorem{lem}{Lemma}

\newtheorem*{thm*}{Theorem}
\newtheorem{remark}{Remark}
\usepackage{float}
\restylefloat{table}
\interdisplaylinepenalty=2500
\newcommand{\mc}[1]{\mathcal{#1}}

\newcommand{\msf}[1]{\mathsf{#1}}
\newcommand{\squeezeup}{\vspace{0mm}}

\begin{document}
\title{Multicoding Schemes for Interference Channels}
\author{Ritesh Kolte, Ayfer \"{O}zg\"{u}r, Haim Permuter%
\thanks{R.~Kolte and A.~\"{O}zg\"{u}r are with the Department of Electrical Engineering at Stanford University. H.~Permuter is with the Department of Electrical and Computer Engineering at Ben-Gurion University of the Negev. This work was presented in part in CISS 2014 Princeton NJ\cite{Kol14b} and ISIT 2014 Honolulu HI \cite{Kol14}.
The work of R. Kolte and A. \"{O}zg\"{u}r was supported in part by a Stanford Graduate fellowship and NSF CAREER award \#1254786. The work of H. Permuter was supported by the Israel Science Foundation (grant no. 684/11) and the ERC starting grant.}
}
\maketitle               

\begin{abstract}
The best known inner bound for the 2-user discrete memoryless interference channel is the Han-Kobayashi rate region. The coding schemes that achieve this region are based on rate-splitting and superposition coding. In this paper, we develop a multicoding scheme to achieve the same rate region. A key advantage of the multicoding nature of the proposed coding scheme is that it can be naturally extended to more general settings, such as when encoders have state information or can overhear each other. In particular, we extend our coding scheme to characterize the capacity region of the state-dependent deterministic Z-interference channel when noncausal state information is available at the interfering transmitter. We specialize our results to the case of the linear deterministic model with on/off interference which models a wireless system where a cognitive transmitter is noncausally aware of the times it interferes with a primary transmission. For this special case, we provide an explicit expression for the capacity region and discuss some interesting properties of the optimal strategy. We also extend our multicoding scheme to find the capacity region of the deterministic Z-interference channel when the signal of the interfering transmitter can be overheard at the other transmitter (a.k.a. unidirectional partial cribbing).
\end{abstract}

\begin{keywords}
Interference channel, Multicoding, Z-interference channel, Partial Cribbing, State Information
\end{keywords}

\section{Introduction}
The discrete memoryless interference channel (DM-IC) is the canonical model for studying the effect of interference in wireless systems. The capacity of this channel is only known in some special cases e.g. class of deterministic ICs \cite{Gam82,Cho07}, strong interference conditions \cite{Sat81,Cos87,Chu07}, degraded conditions \cite{Ben79,Liu08} and a class of semideterministic ICs \cite{Cho09}. Characterizing the capacity region in the general case has been one of the long standing open problems in information theory. The best known achievable rate region is the so-called Han-Kobayashi scheme, which can be achieved by using schemes that are based on the concepts of rate-splitting and superposition coding \cite{Han81,Cho06}. Rate-splitting refers to the technique of splitting the message at a transmitter into a common and a private part, where the common part is decoded at all the receivers and the private part is decoded only at the intended receiver. The two parts of the message are then combined into a single signal using superposition coding, first introduced in \cite{Cov72} in the context of the broadcast channel. 
In all the special cases where the capacity is known, the Han-Kobayashi region equals the capacity region. However, it has been very recently shown that this inner bound is not tight in general \cite{Nai15}.

The first result we present in this paper is to show that  the Han-Kobayashi region can be achieved by a multicoding scheme. This scheme does not involve any explicit rate-splitting. Instead, the codebook at each encoder is generated as a multicodebook, i.e. there are multiple codewords corresponding to each message. The auxiliary random variable in this scheme does not explicitly carry a part of the message, rather it implicitly carries \emph{some} part of the message, and it is not required to specify which part.\footnote{A similar idea, combined with block-Markov operation, has been recently used in \cite{Lim14} to develop an achievability scheme called distributed-decode-forward for broadcast traffic on relay networks.} In this sense, it's role is different from that in the Han-Kobayashi scheme \cite{Han81,Cho06}, and is reminiscent of the encoding for state-dependent channels in \cite{Gel80}, and the alternative proof of Marton's achievable rate region for the broadcast channel given in \cite{Gam81}. A key advantage of
the multicoding nature of the new scheme is that it can be easily extended to obtain simple achievability schemes for setups in which the canonical interference channel model is augmented to incorporate additional node capabilities such as cognition and state-dependence, while extending the original Han-Kobayashi scheme to such setups can quickly become highly involved. We demonstrate this by constructing schemes for settings which augment the canonical interference channel model in different ways.
 

The first setting we consider is when the interference channel is state-dependent and the state-information is available non-causally to one of the transmitters (cognitive transmitter). For simplicity, we focus on the case when the cross-link between the non-cognitive transmitter and its undesired receiver is weak enough to be ignored, giving rise to the so called $Z$-interference channel topology. We know that for a point-to-point state-dependent channel with non-causal state information at the encoder, the optimal achievability scheme due to Gelfand and Pinsker uses multicoding at the encoders. Hence, for state-dependent interference channels with noncausal state information at the encoders too, we would like to use the idea of multicoding. Since the new achievability scheme that we present for the canonical interference channel already involves multicoding, it requires almost no change to be applicable to the state-dependent setting. Apart from being simple, we are also able to prove its optimality for the case of the deterministic $Z$-interference channel.

We then specialize our capacity characterization for the state-dependent deterministic $Z$-interference channel to the case where the channels are governed by the linear deterministic model of \cite{Ave11}. In the recent literature, this model has proven extremely useful for approximating the capacity of wireless networks and developing insights for the design of optimal communication strategies. We consider a linear deterministic Z-interference channel, in which the state of the channel denotes whether the interference link is present or not. When the transmitters are base-stations and the receivers are end-users, this can model the scenario where one of the transmitters is cognitive, for example it can be a central controller that knows when the other Tx-Rx pair will be scheduled to communicate on the same frequency band. When the two Tx-Rx pairs are scheduled to communicate on the same frequency band, this gives an interference channel; when they communicate on different frequency bands each pair gets a clean channel free of interference. Moreover, the cognitive transmitter can know the schedule ahead of time, i.e. the times at which its transmission will be interfering with the second Tx-Rx pair. For this special case, we identify auxiliary random variables and provide an explicit expression for the capacity region. This explicit capacity characterization allows us to identify interesting properties of the optimal strategy. In particular, with single bit level for the linear deterministic channels (which would imply low to moderate SNR for the corresponding Gaussian channels), the sum rate is maximized when the interfering transmitter remains silent (transmits $0$'s) at times when it interferes with the second transmission. It then treats these symbols as stuck to $0$ and performs Gelfand-Pinsker coding. The second transmitter observes a clean channel at all times and communicates at the maximal rate of $1$ bit per channel use. This capacity characterization also reveals that when all nodes are provided with the state information the sum-capacity cannot be further improved. Thus, for this
channel, the sum-capacity when all nodes have state information is the same as that when only the interfering encoder
has state information. 

Motivated by wireless applications, there has been significant recent interest in state-dependent interference channels (ICs), where the state information is known only to some of the transmitters. Given the inherent difficulty of the problem, many special cases have been considered  \cite{Zha13,Goo13,Dua13a,Dua13b}, for which different coding schemes have been proposed. However, exact capacity characterizations have proven difficult. Another line of related work has been the study of cognitive state-dependent ICs  \cite{Rin11,Som08,Dua12,Kaz13}. Here, the term ``cognitive'' is usually used to mean that the cognitive transmitters know not only the state of the channel but also messages of other transmitters. Note that this assumption is significantly stronger than assuming state information at the transmitter as we do here.

The second setting we consider is when one of the transmitters has the capability to overhear the signal transmitted by the other transmitter, which can be used to induce cooperation between the two transmitters. This is different from having orthogonal communication links (or conferencing) between the encoders, as studied in \cite{Wan11b}. Instead, overhearing exploits the natural broadcasting nature of the wireless medium to establish cooperation without requiring any dedicated resources. A variety of different models have been used to capture overhearing \cite{Pra11,Yan11,Car12}, and are known by different names such as cribbing, source cooperation, generalized feedback, cognition etc. We use "partial cribbing" to model the overhearing, in which some deterministic function of the signal transmitted by the non-cognitive transmitter is available at the cognitive transmitter in a strictly causal fashion. Again, for simplicity, we focus on the case of the $Z$-interference channel, where the cross-link between the non-cognitive transmitter and its undesired receiver is weak enough to be ignored. For this setting, we develop a simple achievability scheme by combining our multicoding-based scheme with block-Markov coding and show that it is optimal for deterministic configurations.

Finally, to further illustrate the point that simple schemes can be obtained for augmented scenarios, we describe two extensions which introduce even more complexity in the model. In the first extension, a third message is introduced in  the state-dependent Z-interference channel, which is to be communicated from the interfering transmitter to the interfered receiver. The second extension combines the state-dependent Z-IC and the Z-IC with unidirectional partial cribbing. In both extensions, we are able to obtain simple optimal schemes by naturally extending the multicoding-based achievability schemes.

\subsection*{Organization}
We describe the models considered in this paper formally in Section~\ref{sec:model}. 
The alternate achievability scheme that achieves the Han-Kobayashi region is presented in Sections~\ref{sec:outline}.  Section~\ref{sec:state} describes the results concerning the state-dependent setup and section~\ref{sec:cribbing} describes the results concerning the cribbing setup. The two extensions are described in Section~\ref{sec:extensions} and we end the paper with a short discussion in Section~\ref{sec:conclude}.

\section{Model}\label{sec:model}

Capital letters, small letters and capital calligraphic letters denote random variables, realizations and alphabets respectively. The tuple $(x(1),x(2),\dots ,x(n))$ and the set $\{a,{a+1},\dots ,b\}$ are denoted by $x^n$ and $[a:b]$ respectively, and $\mc{T}_{\epsilon}^{(n)}$ stands for the $\epsilon$-strongly typical set of length-$n$ sequences.

We now describe the channel models considered in this paper.

\subsection{Canonical Interference Channel}
The two-user discrete memoryless interference channel $p_{Y_1,Y_2|X_1,X_2}(y_1,y_2|x_1,x_2)$ is depicted in Fig.~\ref{fig:model}. Each sender $j\in\{1,2\}$ wishes to communicate a message $M_j$ to the corresponding receiver. 

A $(n,2^{nR_1},2^{nR_2},\epsilon)$ code for the above channel consists of the encoding and decoding functions:
\begin{IEEEeqnarray*}{rCl}
f_{j,i} & : & [1:2^{nR_j}] \rightarrow \mc{X}_j, \quad j\in\{1,2\}, 1\leq i\leq n,\\ 
g_j & : & \mc{Y}_j^n \rightarrow [1:2^{nR_j}],\quad j\in\{1,2\},
\end{IEEEeqnarray*}
such that 
$$\text{Pr}\left\{g(Y_j^n)\neq M_j\right\} \leq \epsilon,\quad j\in\{1,2\},$$
where $M_1$ and $M_2$ are assumed to be distributed uniformly in $[1:2^{nR_1}]$ and $[1:2^{nR_2}]$ respectively. A rate pair $(R_1,R_2)$ is said to be \emph{achievable} if for every $\epsilon > 0,$ there exists a $(n,2^{nR_1},2^{nR_2},\epsilon)$ code for sufficiently large $n$. The capacity region is defined to be the closure of the achievable rate region.

\begin{figure}[!ht]
\centering
\includegraphics[scale=1.5]{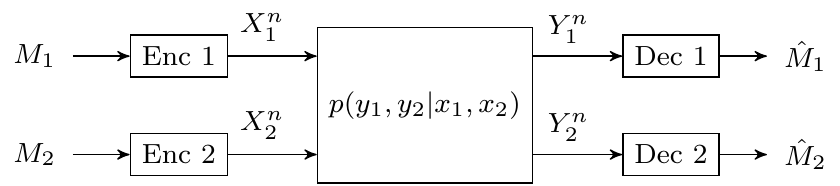}
\caption{Two-User Discrete Memoryless Interference Channel (DM-IC)}
\label{fig:model}
\end{figure}


\subsection{State-Dependent Z-Interference Channel}\label{subsec:model_state}

The discrete memoryless Z-interference channel $p(y_1|x_1,s)p(y_2|x_1,x_2,s)$ with discrete memoryless state $p(s)$ is depicted in Fig.~\ref{fig:model_gen}. The states are assumed to be known noncausally at encoder 1. Each sender $j\in\{1,2\}$ wishes to communicate a message $M_j$ at rate $R_j$ to the corresponding receiver. For this setting, a $(n,2^{nR_1},2^{nR_2},\epsilon)$ code consists of the encoding and decoding functions:
\begin{IEEEeqnarray*}{rCl}
f_{1,i}& : &[1:2^{nR_1}]\times \mc{S}^n \rightarrow \mc{X}_1, \quad  1\leq i\leq n,\\ 
f_{2,i}& :& [1:2^{nR_2}] \rightarrow \mc{X}_2, \quad 1\leq i\leq n,\\ 
g_j &: & \mc{Y}_j^n \rightarrow [1:2^{nR_j}],\quad j\in\{1,2\},
\end{IEEEeqnarray*}
such that 
$$\text{Pr}\left\{g(Y_j^n)\neq M_j\right\} \leq \epsilon,\quad j\in\{1,2\}.$$ The probability of error, achievable rate pairs $(R_1,R_2)$ and the capacity region are defined in a similar manner as before.

\begin{figure}[!h]
\centering
\includegraphics[scale=1.5]{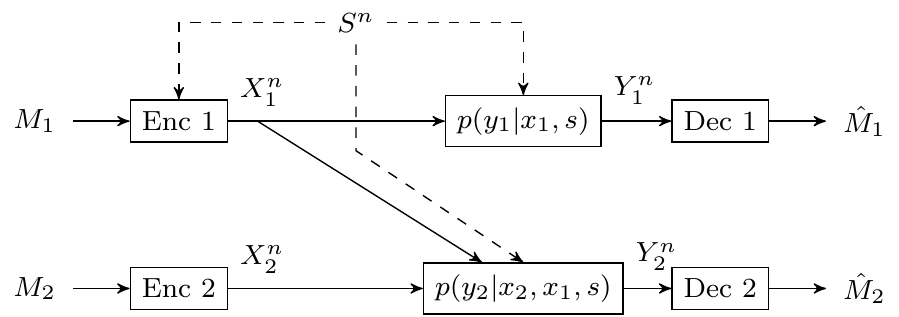}
\caption{The State-Dependent Z-Interference Channel (S-D Z-IC)}
\label{fig:model_gen}\squeezeup
\end{figure}

The deterministic S-D Z-IC is depicted in Fig.~\ref{fig:model_state_det}. The channel output $Y_1$ is a deterministic function $y_1(X_1,S)$ of the channel input $X_1$ and the state $S$. At receiver 2, the channel output $Y_2$ is a deterministic function $y_2(X_2,T_1)$ of the channel input $X_2$ and the interference $T_1$, which is assumed to be a deterministic function $t_1(X_1,S)$. 
We also assume that if $x_2$ is given, $y_2(x_2,t_1)$ is an injective function of $t_1$, i.e. there exists some function $g$ such that $t_1=g(y_2,x_2).$ 

\begin{figure}[!h]
\centering
\includegraphics[scale=1.5]{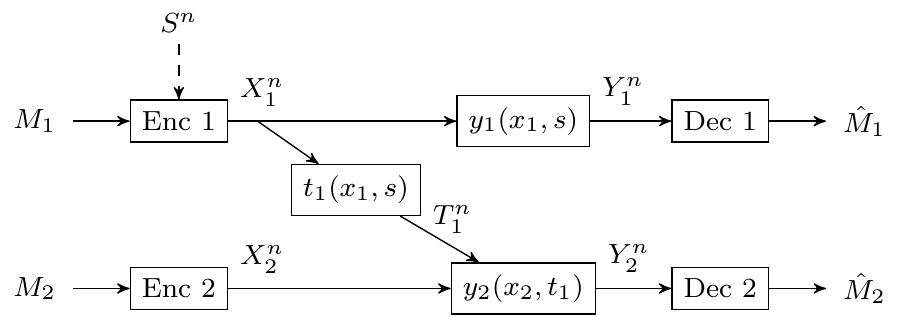}
\caption{The Injective Deterministic S-D Z-IC}
\label{fig:model_state_det}
\end{figure}

We consider a special case of the injective deterministic S-D Z-IC in detail, which is the modulo-additive S-D Z-IC, depicted in Fig.~\ref{fig:model_modulo}. All channel inputs and outputs come from a finite alphabet $\mc{X}=\{0,1,\dots ,|\mc{X}|-1\}$. The channel has two states. In state $S=0$, there is no interference while in state $S=1$, the cross-link is present. When the cross-link is present, the output at receiver~2 is the modulo-$\mc{X}$ sum of $X_2$ and $X_1$. For all other cases, the output is equal to the input. We can describe this formally as: 
\begin{equation*}
\begin{split}
Y_1 & = X_1,\\
Y_2 & = X_2 \oplus (S\cdot X_1).
\end{split}
\end{equation*}
Assume that the state $S$ is i.i.d. Ber$(\lambda)$. A generalization of this model that incorporates multiple levels is also considered subsequently.

\begin{figure}[!h]
\centering
\includegraphics[scale=1.5]{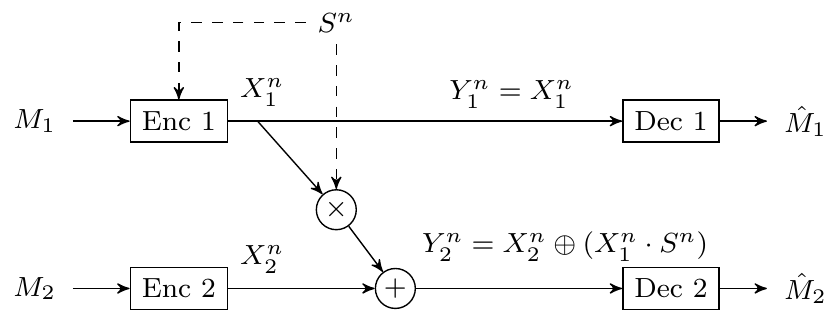}
\caption{The Modulo-Additive S-D Z-IC. All channel inputs and outputs take values in the same finite alphabet $\mc{X}$. The state $S$ is Ber$(\lambda).$}
\label{fig:model_modulo}
\end{figure}

\subsection{Z-Interference Channel with Partial Cribbing}\label{subsec:model_crib}
The discrete memoryless deterministic Z-interference channel is depicted in Fig.~\ref{fig:model_crib}. The channel output $Y_1$ is a deterministic function $y_1(X_1)$ of the channel input $X_1$. At receiver 2, the channel output $Y_2$ is a deterministic function $y_2(X_2,T_1)$ of the channel input $X_2$ and the interference $T_1$, which is assumed to be a deterministic function $t_1(X_1)$. We also assume that if $x_2$ is given, $y_2(x_2,t_1)$ is an injective function of $t_1$, i.e. there exists some function $g$ such that $t_1=g(y_2,x_2).$ Each sender $j\in\{1,2\}$ wishes to communicate a message $M_j$ at rate $R_j$ to the corresponding receiver. 

We assume that encoder 1 can overhear the signal from transmitter 2 \emph{strictly causally}, which is modeled as partial cribbing with a delay \cite{Asn13}. The partial cribbing signal, which is a function of $X_2$ is denoted by $Z_2$. So $X_{1i}$ is a function of $(M_1,Z_2^{i-1})$ and $X_{2i}$ is a function of $M_2$.

\begin{figure}[!ht]
\centering
\includegraphics[scale=1.5]{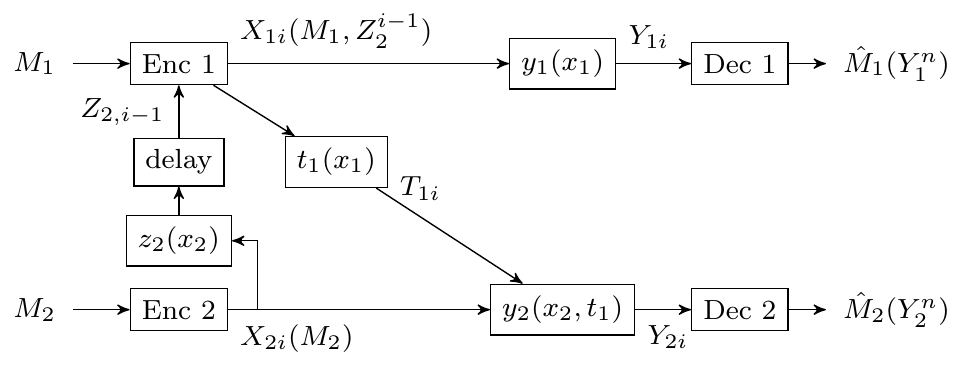}
\caption{Injective Deterministic Z-Interference Channel with Unidirectional Partial Cribbing}
\label{fig:model_crib}
\end{figure}

A $(n,2^{nR_1},2^{nR_2},\epsilon)$ code for this setting consists of 
\begin{IEEEeqnarray*}{rCl}
f_{1,i} & : & [1:2^{nR_1}]\times \mc{Z}_2^{i-1} \rightarrow \mc{X}_1, \quad  1\leq i\leq n,\\ 
f_{2,i} & : & [1:2^{nR_2}] \rightarrow \mc{X}_2, \quad 1\leq i\leq n,\\ 
g_j & : &  \mc{Y}_j^n \rightarrow [1:2^{nR_j}],\quad j\in\{1,2\},
\end{IEEEeqnarray*}
such that 
$$\text{Pr}\left\{g(Y_j^n)\neq M_j\right\} \leq \epsilon,\quad j\in\{1,2\}.$$ The probability of error, achievable rate pairs $(R_1,R_2)$ and the capacity region are defined in a similar manner as before.

\section{Canonical Interference Channel}\label{sec:outline}

\subsection{Preliminaries}\label{sec:prelim}
The currently best known achievable rate region for the 2-user DM-IC was provided by Han and Kobayashi in \cite{Han81}, using a scheme based on rate-splitting and superposition coding. An alternative achievable rate region that included the Han-Kobayashi rate region was proposed in \cite{Cho06}, using another scheme that used rate-splitting and superposition coding. Using the terminology introduced in \cite{Wan13}, the encoding in \cite{Han81} can be described as employing \emph{homogeneous} superposition coding, while that in \cite{Cho06} can be described as employing \emph{heterogeneous} superposition coding. It was then proved in \cite{Cho08} that the two regions are, in fact, equivalent and given by the following compact representation (see also \cite{Kra06,Kob07}).

\begin{thm}[Han-Kobayashi Region]\label{thm:HK}
A rate pair $(R_1,R_2)$ is achievable for the DM-IC $p(y_1,y_2|x_1,x_2)$ if
\begin{equation}\label{eq:achreg_prelim}
\begin{split}
R_1 & < I(X_1;Y_1|U_2,Q),\\
R_2 & < I(X_2;Y_2|U_1,Q),\\
R_1 + R_2 & < I(X_1;Y_1|U_1,U_2,Q) +I(X_2,U_1;Y_2|Q) ,\\
R_1 + R_2 & < I(X_1,U_2;Y_1|U_1,Q) + I(X_2,U_1;Y_2|U_2,Q),\\
R_1 + R_2 & < I(X_1,U_2;Y_1|Q) + I(X_2;Y_2|U_1,U_2,Q),\\
2R_1 + R_2 & < I(X_1;Y_1|U_1,U_2,Q) + I(X_2,U_1;Y_2|U_2,Q) + I(X_1,U_2;Y_1|Q),\\
R_1 + 2R_2 & < I(X_2;Y_2|U_1,U_2,Q) + I(X_1,U_2;Y_1|U_1,Q) + I(X_2,U_1;Y_2|Q),
\end{split}
\end{equation}
for some pmf $p(q)p(u_1,x_1|q)p(u_2,x_2|q),$ where ${|\mc{U}_1|\leq |\mc{X}_1|+4}$, ${|\mc{U}_2|\leq |\mc{X}_2|+4}$ and ${|\mc{Q}|\leq 4.}$
\end{thm}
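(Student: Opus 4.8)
The plan is to prove achievability of the Han-Kobayashi region in Theorem~\ref{thm:HK} via a \emph{multicoding} scheme in which no message bits are explicitly assigned to the auxiliary variables $U_1,U_2$; instead each encoder generates a two-layer codebook where the outer layer ($x_j$) carries the whole message $M_j$ and the inner layer ($u_j$) is chosen by the encoder from a sub-bin so as to be jointly typical with the chosen $x_j$-codeword. Fix a pmf $p(q)p(u_1,x_1|q)p(u_2,x_2|q)$. Generate $q^n\sim\prod p(q_i)$. For $j\in\{1,2\}$, generate $2^{n(R_j+\tilde R_j)}$ sequences $u_j^n(m_j,k_j)\sim\prod p(u_{ji}|q_i)$ indexed by the message $m_j\in[1:2^{nR_j}]$ and a ``dummy'' index $k_j\in[1:2^{n\tilde R_j}]$; then for each $(m_j,k_j)$ generate one $x_j^n(m_j,k_j)\sim\prod p(x_{ji}|u_{ji}(m_j,k_j),q_i)$. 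This is exactly the homogeneous-superposition codebook, but the crucial point is that $k_j$ is \emph{not} part of the message.

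\textbf{Encoding.} Given $m_j$, encoder $j$ simply picks $k_j=1$ (equivalently, any fixed value) and transmits $x_j^n(m_j,1)$ --- so in fact the dummy index is only a bookkeeping device and $\tilde R_j$ can be taken to be $0$ at the encoder; the covering that normally forces $\tilde R_j>0$ is not needed here because $x_j^n$ is generated \emph{conditionally} on $u_j^n$ rather than the reverse. (If one prefers the symmetric description matching \cite{Gam81,Gel80}, one keeps a genuine index $k_j$ and has the encoder choose it, but for the plain IC a single $u_j^n$ per message suffices.) \textbf{Decoding.} Receiver~1 looks for the unique $\hat m_1$ such that $(q^n,u_1^n(\hat m_1,\cdot),x_1^n(\hat m_1,\cdot),u_2^n(m_2,\cdot),y_1^n)\in\mc{T}_\epsilon^{(n)}$ for \emph{some} choice of the other indices --- i.e. it treats the interferer's codeword $u_2^n$ as an unknown nuisance variable to be decoded jointly (non-uniquely) along with $x_1^n$, exactly as in the compact Han-Kobayashi proof with non-unique decoding \cite{Kra06,Kob07}. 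Symmetrically at receiver~2.

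\textbf{Error analysis.} By the packing lemma / joint typicality arguments, the probability that the true indices fail the typicality test $\to 0$, and the probability that a wrong $\hat m_1$ (with some choice of nuisance indices) passes the test is bounded by a union over the error events classified by which of $(m_1,u_2\text{-index})$ are wrong. Enumerating these events and applying the packing lemma gives constraints of the form $R_1 < I(X_1;Y_1|U_2,Q)$, $R_1 < I(X_1,U_2;Y_1|Q)$ (when the interferer's $u_2$ is also decoded wrong) and, after Fourier--Motzkin elimination of the auxiliary ``decode-or-not'' rates, precisely the seven inequalities in \eqref{eq:achreg_prelim}. The cardinality bounds $|\mc U_1|\le|\mc X_1|+4$ etc.\ follow from the standard convex-cover / Carath\'eodory argument applied to the region as a function of $p(u_1,x_1|q)$.

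\textbf{Main obstacle.} The only genuinely nonstandard point --- and the one I would spend the most care on --- is verifying that the multicoding codebook (where $x_j^n$ is drawn conditionally on a $u_j^n$ that carries \emph{no} message information) still yields the \emph{same} mutual-information terms in the error exponents as the usual superposition codebook; concretely, one must check that because $U_j \to X_j$ (given $Q$) rather than the message being split across $U_j$ and a ``private'' layer, the relevant densities in the packing lemma are $p(u_j^n|q^n)$ and $p(x_j^n|u_j^n,q^n)$, so that decoding $x_1^n$ given $u_1^n$ costs $I(X_1;Y_1|U_1,U_2,Q)$ while decoding $u_1^n$ itself (when it must be decoded at receiver~2 as the interferer) costs $I(U_1;Y_2|\cdot)$-type terms --- and that the Fourier--Motzkin step then reproduces \eqref{eq:achreg_prelim} rather than a strictly smaller region. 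Once this accounting is pinned down, the rest is a routine typicality/packing-lemma exercise and an invocation of \cite{Cho08} for the equivalence of representations.
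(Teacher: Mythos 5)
There is a genuine gap in your construction, and it is exactly at the point you flag as your ``main obstacle.'' In your codebook, the cloud sequences $u_j^n(m_j,k_j)$ are indexed \emph{by the full message} $m_j$ (together with the dummy $k_j$, which you then fix to $1$), and the satellites $x_j^n$ are drawn conditionally on those clouds. As a result, from the point of view of the \emph{other} receiver, the interfering cloud codebook has at least $2^{nR_j}$ distinct $u_j^n$ sequences, so the nuisance variable in the non-unique decoding step carries rate $R_j$ (or $R_j+\tilde R_j$), not an independently tunable ``common'' rate. Working out the packing-lemma constraints at receiver~1 you then get, among others, $R_1+R_2 < I(X_1,U_2;Y_1|Q)$; there is no remaining auxiliary rate for Fourier--Motzkin to trade off, so you can never recover the Han--Kobayashi sum-rate bound $R_1+R_2 < I(X_1,U_2;Y_1|Q)+I(X_2;Y_2|U_1,U_2,Q)$, which is strictly weaker. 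Setting $\tilde R_j=0$ and ``picking $k_j=1$'' is precisely what removes the degree of freedom needed to reach the full region; the parenthetical remark that ``a single $u_j^n$ per message suffices'' is where the argument breaks.

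The paper avoids this by doing something structurally different from what you wrote: it generates the $u_j^n$ codebook of size $2^{nR_{jc}}$ and the $x_j^n$ multicodebook of size $2^{n(R_j+R_{jp})}$ \emph{independently} (not conditionally), then uses the \emph{mutual covering lemma} to find a jointly typical pair $(u_j^n(l_{jc}),x_j^n(m_j,l_{jp}))$ inside bin $m_j$, at the cost of the constraint $R_{jp}+R_{jc}>I(U_j;X_j)$. Crucially, the $u_j^n$ sequences carry no message index at all and $R_{jc}$ is a free rate, so the nuisance at the other receiver has only $2^{nR_{jc}}$ possibilities; Fourier--Motzkin elimination over $R_{1c},R_{1p},R_{2c},R_{2p}$ (with the covering constraints included) is what reproduces the seven HK inequalities, and it also requires a multivariate packing lemma for the events where both the own-cloud and the cross-cloud indices are wrong. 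Your scheme, by contrast, has no covering step, no free cloud rate to eliminate, and consequently yields a strictly smaller region. To repair the proposal you would need to decouple the $u_j^n$ index from $m_j$ and reinstate the covering step, at which point you would essentially be reproducing the paper's scheme.

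Two smaller points. First, your sketch is internally inconsistent: the opening paragraph says the encoder ``chooses $u_j$ from a sub-bin so as to be jointly typical with the chosen $x_j$-codeword,'' but the construction that follows generates $x_j^n$ conditionally on $u_j^n$ and fixes $k_j=1$, so no search ever happens. Second, invoking \cite{Cho08} at the end is circular for the purpose of this theorem: \cite{Cho08} establishes the equivalence of two known rate-splitting representations, whereas the content to be proved here is that a multicoding (covering-based) scheme achieves the same region, and that requires carrying out the packing/covering analysis and the Fourier--Motzkin step explicitly.
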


\subsection{Outline of the new achievability scheme}
We first describe the alternative achievability scheme informally and discuss the similarities and differences with the existing achievability schemes. The later subsections describe and analyze the scheme formally.

Encoder $j$, where $j\in\{1,2\}$ prepares two codebooks: 
\begin{itemize}
\item A transmission multicodebook\footnote{The term ``multicodebook'' refers to the fact that there are multiple codewords corresponding to each message.}, which is a set of codewords $\{x_j^n(\cdot,\cdot)\}$ formed using the transmission random variable $X_j$. This set is partitioned into a number of bins (or subcodebooks), where the bin-index corresponds to the message, 
\item A coordination codebook which is a set of codewords $\{u_j^n(\cdot)\}$ formed using the auxiliary random variable $U_j$.
\end{itemize}
Given a message, one codeword $x_j^n$ from the corresponding bin in the transmission multicodebook is chosen so that it is jointly typical with some sequence $u_j^n$ in the coordination codebook. The codeword $x_j^n$ so chosen forms the transmission sequence.

At a decoder, the desired message is decoded by using joint typicality decoding, which uses the coordination codebook and the transmission multicodebook of the corresponding encoder and the coordination codebook of the other encoder. Thus, a receiver makes use of the interference via its knowledge of the coordination codebook at the interfering transmitter.

From the above description, it can be seen that the coordination codebook does not carry any message. Its purpose is to ensure that the transmission sequence from a given bin is well-chosen, i.e. it is beneficial to the intended receiver and also the unintended receiver. To the best of our knowledge, this is the first time an auxiliary random variable (which is not the time-sharing random variable) appears in one of the best known achievability schemes without being explicitly associated with any message.

\subsection{Achievability scheme}\label{subsec:achHK}
Choose a pmf $p(u_1,x_1)p(u_2,x_2)$ and $0<\epsilon'<\epsilon$.
\subsubsection*{Codebook Generation}
\begin{itemize}
\item Encoder 1 generates a coordination codebook consisting of $2^{nR_{1c}}$ codewords\footnote{Though there is no notion of a common message or a private message in this achievability scheme, we use the subscripts $c$ and $p$ to convey if the corresponding random variables are used for decoding at all destinations or only the desired destination respectively.} $u_1^n(l_{1c}),\;l_{1c}\in[1:2^{nR_{1c}}]$ i.i.d. according to $\prod_{i=1}^np(u_{1i})$. It also generates a transmission multicodebook consisting of $2^{n(R_1+R_{1p})}$ codewords $x_1^n(m_1,l_{1p}),\;m_1\in[1:2^{nR_1}],\; l_{1p}\in[1:2^{nR_{1p}}]$ i.i.d. according to $\prod_{i=1}^np(x_{1i})$.
\item Similarly, encoder 2 generates a coordination codebook consisting of $2^{nR_{2c}}$ codewords $u_2^n(l_{2c}),\;l_{2c}\in[1:2^{nR_{2c}}]$ i.i.d. according to $\prod_{i=1}^np(u_{2i})$. It also generates a transmission multicodebook consisting of $2^{n(R_2+R_{2p})}$ codewords $x_2^n(m_2,l_{2p}),\;m_2\in[1:2^{nR_2}],\; l_{2p}\in[1:2^{nR_{2p}}]$ i.i.d. according to $\prod_{i=1}^np(x_{2i})$.
\end{itemize}

\subsubsection*{Encoding}
\begin{itemize}
\item To transmit message $m_1$, encoder 1 finds a pair $(l_{1c},l_{1p})$ such that $$(u_1^n(l_{1c}),x_1^n(m_1,l_{1p}))\in\mc{T}^{(n)}_{\epsilon'}$$ and transmits $x_1^n(m_1,l_{1p})$. If it cannot find such a pair, it transmits $x_1^n(m_1,1)$.
\item Similarly, to transmit message $m_2$, encoder 2 finds a pair $(l_{2c},l_{2p})$ such that $$(u_2^n(l_{2c}),x_2^n(m_2,l_{2p}))\in\mc{T}^{(n)}_{\epsilon'}$$ and transmits $x_2^n(m_2,l_{2p})$. If it cannot find such a pair, it transmits $x_2^n(m_2,1)$.
\end{itemize}

The codebook generation and encoding process are illustrated in Fig.~\ref{fig:encoding}.

\begin{figure}[t]
\centering
\includegraphics[scale=1.5]{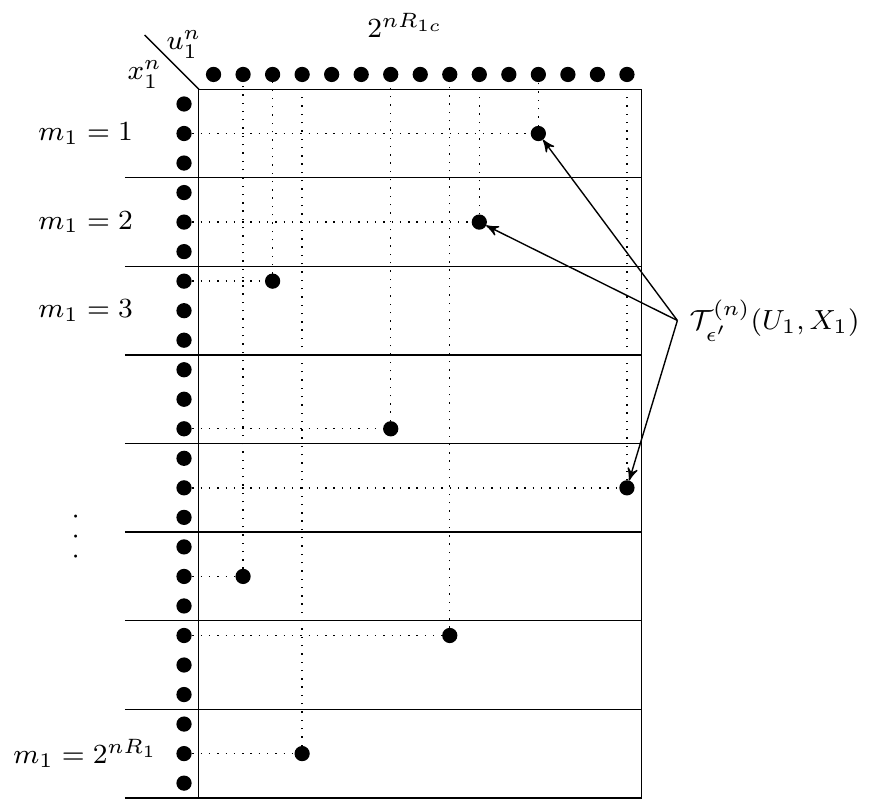}
\vspace{1mm}
\caption{Codebook Generation and Encoding at Encoder~1. The independently generated $x_1^n$ sequences, lined up vertically in the figure, are binned into $2^{nR_1}$ bins. The independently generated coordination sequences $u_1^n$ are lined up horizontally. To transmit message $m_1$, a jointly typical pair $(x_1^n,u_1^n)$ is sought where $x_1^n$ falls into the $m_1$-th bin, and then $x_1^n$ is transmitted.}
\label{fig:encoding}
\end{figure}

\subsubsection*{Decoding}
\begin{itemize}
\item Decoder 1 finds the unique $\hat{m}_1$ such that $$(u_1^n(l_{1c}),x_1^n(\hat{m}_1,l_{1p}),u_2^n(l_{2c}),y_1^n)\in\mc{T}^{(n)}_{\epsilon}$$ for some $(l_{1c},l_{1p},l_{2c})$. If none or more than one such $\hat{m}_1$ are found, then decoder 1 declares error.
\item Decoder 2 finds the unique $\hat{m}_2$ such that $$(u_2^n(l_{2c}),x_2^n(\hat{m}_2,l_{2p}),u_1^n(l_{1c}),y_2^n)\in\mc{T}^{(n)}_{\epsilon}$$ for some $(l_{2c},l_{2p},l_{1c})$. If none or more than one such $\hat{m}_2$ are found, then decoder 2 declares error.
\end{itemize}

\subsubsection*{Discussion}
Before providing the formal analysis of the probability of error to show that the coding scheme described above achieves the Han-Kobayashi region, we discuss the connection between the new scheme and the scheme from \cite{Cho06} which motivates the equivalence of their rate regions.

Consider the set of codewords used at encoder 1. While this set resembles a multicodebook, it can be reduced to a standard codebook (one codeword per message) by stripping away the codewords in each bin that are not jointly typical with any of the $u_1^n$ sequences, and therefore are never used by the transmitters. In other words, after we generate the multicodebook in Fig.~\ref{fig:encoding}, we can form a smaller codebook by only keeping one codeword per message which is jointly typical with one of the $u_1^n$ sequences (i.e., those codewords highlighted in Fig.~\ref{fig:encoding}). Note that this reduced codebook indeed has a superposition structure. Each of the $ 2^{nR_1} $ remaining codewords  $x_1^n$ is jointly typical with one of the $2^{nR_{1c}}$  $u_1^n$ codewords, and when $n$ is large there will be exactly $ 2^{n(R_1-R_{1c})} $   $x_1^n$ sequences that are typical with each $u_1^n$ sequence, i.e., these $ 2^{n(R_1-R_{1c})} $   $x_1^n$ sequences will look as if they were generated i.i.d. from $p(x_1|u_1)$. Therefore, the $u_1^n$ sequences can be indeed thought as the cloud centers in this superposition codebook and $x_1^n$'s as the satellite codewords. Therefore, our multicodebook construction can be viewed as an equivalent way to generate a superposition codebook as in \cite{Cho08}. This reveals that both the codebook structure and the decoding in our scheme are similar to that in the Han-Kobayashi scheme and therefore the two achievable rate regions are, not surprisingly, equal.

However, note that for broadcast channels, combining Marton coding (which employs multicoding) \cite{Gam81} with Gelfand-Pinsker coding (which also employs multicoding) is more straightforward than combining superposition coding with Gelfand-Pinsker coding. The former has been shown to be optimal in some cases \cite{Lap13}. Since our codebook construction for the interference channel also has the flavor of multicoding, extending this construction to setups where multicoding is required is also quite straightforward. As mentioned in the introduction, we exploit this to develop simple achievability schemes for more general setups described in later sections.

\subsubsection*{Probability of Error}
Due to the symmetry of the code, the average probability of error $\msf{P}(\mc{E})$ is equal to $\msf{P}(\mc{E}|M_1,M_2)$, so we can assume $(M_1,M_2) = (1,1)$ and analyze $\msf{P}(\mc{E}|1,1)$. Let $(L_{1c},L_{1p},L_{2c},L_{2p})$ denote the indices chosen during encoding by encoder 1 and encoder 2. 

We now define events that cover the event of error in decoding message $m_1$:
\begin{IEEEeqnarray*}{rCl}
\mc{E}_1 & \triangleq & \{(U_1^n(l_{1c}),X_1^n(1,l_{1p}))\notin\mc{T}^{(n)}_{\epsilon'} \;\text{ for all } l_{1c}, l_{1p}\}, \\
\mc{E}_2 & \triangleq & \{(U_1^n(L_{1c}),X_1^n(1,L_{1p}),U_2^n(L_{2c}),Y_1^n)\notin\mc{T}^{(n)}_{\epsilon}\}, \\
\mc{E}_3 & \triangleq & \{(U_1^n(L_{1c}),X_1^n(m_1,l_{1p}),U_2^n(L_{2c}),Y_1^n)\in\mc{T}^{(n)}_{\epsilon}\text{ for some }m_1\neq 1, \text{ for some } l_{1p}\}, \\
\mc{E}_4 & \triangleq & \{(U_1^n(L_{1c}),X_1^n(m_1,l_{1p}),U_2^n(l_{2c}),Y_1^n)\in\mc{T}^{(n)}_{\epsilon} \text{ for some }m_1\neq 1, \text{ for some } l_{1p},l_{2c}\} ,\\
\mc{E}_5 & \triangleq & \{(U_1^n(l_{1c}),X_1^n(m_1,l_{1p}),U_2^n(L_{2c}),Y_1^n)\in\mc{T}^{(n)}_{\epsilon} \text{ for some }m_1\neq 1, \text{ for some } l_{1p},l_{1c}\} ,\\
\mc{E}_6 & \triangleq & \{(U_1^n(l_{1c}),X_1^n(m_1,l_{1p}),U_2^n(l_{2c}),Y_1^n)\in\mc{T}^{(n)}_{\epsilon} \text{ for some }m_1\neq 1, \text{ for some } l_{1c},l_{1p},l_{2c}\}.
\end{IEEEeqnarray*}

Consider also the event $\mc{E}'_1$, analogous to $\mc{E}_1$, which is defined as follows.
\begin{IEEEeqnarray*}{rCl}
\mc{E}'_1 & \triangleq & \{(U_2^n(l_{2c}),X_2^n(1,l_{2p}))\notin\mc{T}^{(n)}_{\epsilon'} \;\text{ for all } l_{2c}, l_{2p}\}. \label{eq:E'1}
\end{IEEEeqnarray*}

Since an error for $m_1$ occurs only if at least one of the above events occur, we use the union bound to get the following upper bound on the average probability of error in decoding $m_1$:
$$ \msf{P}(\mc{E}_1) + \msf{P}(\mc{E}'_1)+ \msf{P}(\mc{E}_2\cap\mc{E}_1^c\cap\mc{E}'^{c}_1) + \msf{P}(\mc{E}_3) + \msf{P}(\mc{E}_4) + \msf{P}(\mc{E}_5) + \msf{P}(\mc{E}_6).$$

By the mutual covering lemma \cite[Chap. 8]{Gam12}, $\msf{P}(\mc{E}_1)\rightarrow 0$ as $n\rightarrow\infty$ if 
\begin{IEEEeqnarray}{rCl}
R_{1p} + R_{1c} & > & I(U_1;X_1) + \delta(\epsilon'),\label{eq:ach1}
\end{IEEEeqnarray}
where $\delta(\epsilon')\rightarrow 0$ as $\epsilon'\rightarrow 0.$

Similarly, we get that 
$\msf{P}(\mc{E}'_1)\rightarrow 0$ as $n\rightarrow\infty$ if 
\begin{IEEEeqnarray}{rCl}
R_{2p} + R_{2c} & > & I(U_2;X_2) + \delta(\epsilon').\label{eq:ach1'}
\end{IEEEeqnarray}

By the conditional typicality lemma, $\msf{P}(\mc{E}_2\cap\mc{E}_1^c\cap\mc{E}'^{c}_1)$ tends to zero as $n\rightarrow\infty$.

For $\msf{P}(\mc{E}_3)\rightarrow 0$, we can use the packing lemma from \cite[Ch. 3]{Gam12} to get the condition
\begin{equation}\label{eq:ach2}
R_1 + R_{1p}  <  I(X_1;U_1,U_2,Y_1) - \delta(\epsilon),
\end{equation}where $\delta(\epsilon)\rightarrow 0$ as $\epsilon\rightarrow 0.$

For $\msf{P}(\mc{E}_4)\rightarrow 0$, we can again use the packing lemma to get the condition
\begin{equation}\label{eq:ach3}
R_1 + R_{1p} + R_{2c} < I(X_1,U_2;U_1,Y_1)- \delta(\epsilon).
\end{equation}

For $\msf{P}(\mc{E}_5)\rightarrow 0$, we apply the multivariate packing lemma from the Appendix as shown in \eqref{eq:multipack_2} to get the condition
\begin{IEEEeqnarray}{LCl}
R_1 + R_{1p} + R_{1c} < I(U_1;X_1) + I(U_1,X_1;U_2,Y_1) - \delta(\epsilon).\label{eq:ach4}
\end{IEEEeqnarray} 

Finally, for $\msf{P}(\mc{E}_6)\rightarrow 0$ as $n\rightarrow\infty$, another application of the multivariate packing lemma as shown in \eqref{eq:multipack_3} gives the condition
\begin{IEEEeqnarray}{rCl}
R_1 + R_{1p} + R_{1c} + R_{2c} & < & I(U_1;X_1) + I(U_2;Y_1) + I(U_1,X_1;U_2,Y_1)-\delta(\epsilon).\label{eq:ach5}
\end{IEEEeqnarray}

A similar analysis leads to the following additional conditions for the probability of error in decoding $m_2$ to vanish as $n\rightarrow\infty$.
\begin{IEEEeqnarray}{rCl}
R_2 + R_{2p} & < & I(X_2;U_2,U_1,Y_2) - \delta(\epsilon),\label{eq:ach7}\\
R_2 + R_{2p} + R_{1c} & < & I(X_2,U_1;U_2,Y_2)- \delta(\epsilon),\label{eq:ach8}\\
R_2 + R_{2p} + R_{2c} & < & I(U_2;X_2) + I(U_2,X_2;U_1,Y_2)- \delta(\epsilon),\label{eq:ach9}\\
R_2 + R_{2p} + R_{2c} + R_{1c} & < & I(U_2;X_2) + I(U_1;Y_2) + I(U_2,X_2;U_1,Y_2)-\delta(\epsilon).\label{eq:ach10}
\end{IEEEeqnarray}

Hence the probability of error vanishes as $n\rightarrow\infty$ if the conditions \eqref{eq:ach1}-\eqref{eq:ach10} are satisfied. 
For the sake of brevity, let us first denote the RHS of the conditions \eqref{eq:ach1}-\eqref{eq:ach10} by $a,b,c,d,e,f,g,h,i,j$ respectively (ignoring the $\delta(\epsilon')$ and $\delta(\epsilon)$ terms). 

We then note the following relations among these terms which can be proved using the chain rule of mutual information, the Markov chains $U_1-X_1-(U_2,X_2,Y_1,Y_2)$ and $U_2-X_2-(U_1,X_1,Y_1,Y_2)$ and the independence of $(U_1,X_1)$ and $(U_2,X_2)$.
\begin{equation}\label{eq:relFM}
\begin{gathered}
e-a  \leq  \min\{c,d\},\\
f -a  \leq  d \leq f,\\
c  \leq  e  \leq  f,\\
i-b \leq  \min\{g,h\},\\
j-b  \leq  h \leq j,\\
g  \leq  i  \leq  j.
\end{gathered}
\end{equation}

We now employ Fourier-Motzkin elimination on the conditions \eqref{eq:ach1}-\eqref{eq:ach10} and $R_{1c},R_{1p},R_{2c},R_{2p}\geq 0$ to eliminate $R_{1c},R_{1p},R_{2c},R_{2p}$. The set of relations \eqref{eq:relFM} can be used to simplify this task by recognizing redundant constraints. At the end, we get the following achievable region:
\begin{equation}\label{eq:achreg1}
\begin{split}
R_1 & <  e-a,\\
R_2 & <  i-b,\\
R_1 + R_2 & <  c + j-a-b,\\
R_1 + R_2 & <  d + h-a-b,\\
R_1 + R_2 & <  f + g-a-b,\\
2R_1 + R_2 & <  c+h+f-2a-b,\\
R_1+2R_2 & < d +g+j-a-2b.
\end{split}
\end{equation}

Using the same facts as those used to prove \eqref{eq:relFM}, we can show that the above region is the same as the Han-Kobayashi region. For the sake of completeness, we show this explicitly. 
\begin{itemize}
\item Consider the upper bound on $R_1$:
\begin{IEEEeqnarray}{rCl}
e-a & = & I(U_1,X_1;U_2,Y_1)\nonumber\\
& \stackrel{(a)}{=} & I(X_1;U_2,Y_1)\nonumber\\
& \stackrel{(b)}{=} & I(X_1;Y_1|U_2),\label{eq:achreg2}
\end{IEEEeqnarray}where step $(a)$ follows since $U_1-X_1-(U_2,Y_1)$ is a Markov chain, and step $(b)$ follows since $X_1$ is independent of $U_2$.
\item Similarly, \begin{equation}\label{eq:achreg3}i-b= I(X_2;Y_2|U_1).\end{equation}
\item Consider the first upper bound on the sum-rate ${c+j-a-b}$:
\begin{IEEEeqnarray}{lCl}
c+j-a-b \nonumber\\
 =  I(X_1;U_1,U_2,Y_1) + I(U_2;X_2) + I(U_1;Y_2)  \nonumber\\
 \quad\quad +\> I(U_2,X_2;U_1,Y_2)- I(U_2;X_2)-I(U_1;X_1)\nonumber\\
 \stackrel{(a)}{=}  I(X_1;U_2,Y_1|U_1) + I(U_1;Y_2) + I(U_2,X_2;U_1,Y_2)\nonumber\\
 \stackrel{(b)}{=}  I(X_1;U_2,Y_1|U_1) + I(U_1;Y_2) + I(X_2;U_1,Y_2)\nonumber\\
 \stackrel{(c)}{=}  I(X_1;U_2,Y_1|U_1) + I(U_1;Y_2) + I(X_2;Y_2|U_1)\nonumber\\
 \stackrel{(d)}{=}  I(X_1;Y_1|U_1,U_2) + I(X_2,U_1;Y_2),\label{eq:achreg4}
\end{IEEEeqnarray}where step $(a)$ follows by the chain rule of mutual information, step $(b)$ follows by the Markov chain $U_2-X_2-(U_1,Y_2)$, step $(c)$ follows since $U_1$ and $X_2$ are independent and step $(d)$ follows by the independence of $U_2$ and $(U_1,X_1)$. 
\item By similar steps, $f+g-a-b =$  \begin{equation}\label{eq:achreg5}I(X_1,U_2;Y_1) + I(X_2;Y_2|U_1,U_2).\end{equation}
\item The remaining upper-bound on the sum-rate $d+h-a-b$ can be simplified as follows: \begin{IEEEeqnarray}{lCl}
d+h-a-b \nonumber\\
=  I(X_1,U_2;U_1,Y_1) + I(X_2,U_1;U_2,Y_2) \nonumber\\
\quad -\> I(U_1;X_1) - I(U_2;X_2)\nonumber\\
 =  I(X_1,U_2;Y_1|U_1) + I(X_2,U_1;Y_2|U_2),\label{eq:achreg6}
\end{IEEEeqnarray}which follows by the chain rule of mutual information and the independence of $(U_1,X_1)$ and $(U_2,X_2)$.
\item The upper bound on $2R_1+R_2$ can be simplified as follows:
\begin{IEEEeqnarray}{lCl}
c+h+f-2a-b \nonumber\\
= I(X_1;U_1,U_2,Y_1) + I(X_2,U_1;U_2,Y_2) + I(U_1,X_1) \nonumber\\
\quad  +\> I(U_2;Y_1) + I(U_1,X_1;U_2,Y_1) - 2I(U_1;X_1) -  I(U_2;X_2)\nonumber\\
 \stackrel{(a)}{=}  I(X_1;U_2,Y_1|U_1) + I(X_2,U_1;Y_2|U_2) + I(U_2;Y_1) + I(U_1,X_1;U_2,Y_1)\nonumber\\
 \stackrel{(b)}{=}  I(X_1;U_2,Y_1|U_1) + I(X_2,U_1;Y_2|U_2) + I(U_2;Y_1) + I(X_1;Y_1|U_2)\nonumber\\
 \stackrel{(c)}{=}  I(X_1;Y_1|U_1,U_2) + I(X_2,U_1;Y_2|U_2) + I(X_1,U_2;Y_1),\label{eq:achreg7}
\end{IEEEeqnarray}where step $(a)$ holds by the chain rule of mutual information and the independence of $U_1$ and $(U_2,X_2)$, step $(b)$ follows by $U_1-X_1-(U_2,Y_1)$ and the independence of $X_1$ and $U_2$, and step $(c)$ follows by the chain rule of mutual information and the independence of $U_2$ and $(U_1,X_1)$.
\item Finally, $d+g+j-a-2b$ can be similarly shown to be equal to \begin{equation}\label{eq:achreg8} I(X_2;Y_2|U_1,U_2) + I(X_1,U_2;Y_1|U_1) + I(X_2,U_1;Y_2).\end{equation}
\end{itemize}

From \eqref{eq:achreg1}-\eqref{eq:achreg8} and including a time-sharing random variable $Q$, we get that the following region is achievable:
\begin{equation}\label{eq:achreg}
\begin{split}
R_1 & < I(X_1;Y_1|U_2,Q),\\
R_2 & < I(X_2;Y_2|U_1,Q),\\
R_1 + R_2 & < I(X_1;Y_1|U_1,U_2,Q) +I(X_2,U_1;Y_2|Q) ,\\
R_1 + R_2 & < I(X_1,U_2;Y_1|U_1,Q) + I(X_2,U_1;Y_2|U_2,Q),\\
R_1 + R_2 & < I(X_1,U_2;Y_1|Q) + I(X_2;Y_2|U_1,U_2,Q),\\
2R_1 + R_2 & < I(X_1;Y_1|U_1,U_2,Q) + I(X_2,U_1;Y_2|U_2,Q) + I(X_1,U_2;Y_1|Q),\\
R_1 + 2R_2 & < I(X_2;Y_2|U_1,U_2,Q) + I(X_1,U_2;Y_1|U_1,Q) + I(X_2,U_1;Y_2|Q),
\end{split}
\end{equation}
for pmf $p(q)p(u_1,x_1|q)p(u_2,x_2|q).$ This region is identical to the region in \eqref{eq:achreg_prelim}.\hfill\IEEEQED

\section{State-dependent Interference channels}\label{sec:state}
In this section, we focus on the particular setup of the state-dependent Z-interference channel (S-D Z-IC) with noncausal state information at the interfering transmitter, as depicted in Fig.~\ref{fig:model_gen}. We provide a simple achievability scheme for this setup, that is obtained from the alternative achievability scheme for the general interference channel. This scheme is shown to be optimal for the deterministic case. The auxiliary random variable used for encoding at the interfering transmitter now implicitly captures some part of the message as well as some part of the state sequence realization.  
The achievability scheme can also be viewed as a generalization of the schemes presented in \cite{Cad09} and \cite{Dua13b}.


After characterizing the capacity region of the deterministic S-D Z-IC, we investigate a special case in detail: the modulo-additive S-D Z-IC. The modulo-additive channel is motivated by the linear deterministic model which has gained popularity over the recent years for studying wireless networks \cite{Ave11}. For this case (which can be thought of as a linear deterministic model with only one \emph{bit level}), we obtain an explicit description of the capacity region and furthermore, show that the capacity region is also achieved by the standard Gelfand-Pinsker coding over the first link and treating interference as noise over the second link. Following this, the modulo-additive S-D Z-IC with multiple levels is considered and some discussion is provided about the capacity region and the performance of simple achievability schemes.

To summarize, this section contains the following contributions:
\begin{itemize}
\item An achievable rate region for the S-D Z-IC,
\item Capacity region of the injective deterministic S-D Z-IC,
\item Modulo-additive S-D Z-IC: optimality of treating interference-as-noise and other properties.
\end{itemize}

\subsection{Results for the State-Dependent Channel}\label{subsec:main_res_state}

The following theorem provides an inner bound to the capacity region of the S-D Z-IC in Fig.~\ref{fig:model_gen}.
\begin{thm}\label{thm:gen_ach}
A rate pair $(R_1,R_2)$ is achievable for the channel in Fig.~\ref{fig:model_gen} if
\begin{equation}\label{eq:gen_ach}
\begin{split}
R_1 & < I(U;Y_1|Q)-I(U;S|Q),\\
R_2 & < I(X_2;Y_2|V,Q),\\
R_2 & < I(V,X_2;Y_2|Q) - I(V;S|Q),\\
R_1 + R_2 & < I(U;Y_1|Q)+ I(V,X_2;Y_2|Q)\\
 & \quad\quad -I(U;S|Q)- I(U,S;V|Q),
\end{split}
\end{equation}for some pmf $p(q)p(u,v|s,q)p(x_1|u,v,s,q)p(x_2|q).$
\end{thm}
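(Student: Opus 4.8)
The plan is to lift the multicoding scheme of Section~\ref{subsec:achHK} to the state-dependent Z-channel, upgrading encoder~1's two codebooks to Gelfand--Pinsker--style codebooks so that they can be matched to the noncausally known state $S^n$, while encoder~2 and decoder~2 essentially reuse their roles from Section~\ref{sec:outline}. Fix a pmf $p(q)p(u,v|s,q)p(x_1|u,v,s,q)p(x_2|q)$ and $\epsilon'<\epsilon$. I would have encoder~1 generate a \emph{coordination} codebook of $2^{n\tilde R_v}$ sequences $v^n(l_v)$ i.i.d.\ $\sim\prod_i p_{V|Q}(v_i|q_i)$ (this plays the role of the $U_1$ codebook of Section~\ref{subsec:achHK}: it carries no message, and will be matched to $S^n$), and, for each message $m_1$, a \emph{transmission multicodebook} of $2^{n\tilde R_u}$ sequences $u^n(m_1,l_u)$ i.i.d.\ $\sim\prod_i p_{U|Q}(u_i|q_i)$. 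Encoder~2 generates $2^{nR_2}$ codewords $x_2^n(m_2)\sim\prod_i p_{X_2|Q}(x_{2i}|q_i)$; in the Z-topology it need not protect against interference. All codebooks are revealed to all parties.

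For encoding, encoder~1 given $(m_1,s^n)$ looks for an index pair $(l_v,l_u)$ with $(q^n,v^n(l_v),u^n(m_1,l_u),s^n)\in\mc{T}^{(n)}_{\epsilon'}$, then generates $x_1^n$ i.i.d.\ through $p(x_1|u,v,s,q)$ from the chosen sequences and transmits it (transmitting an arbitrary codeword if no such pair exists). By the multivariate covering lemma this succeeds with probability tending to $1$ provided $\tilde R_v>I(V;S|Q)$, $\tilde R_u>I(U;S|Q)$, and $\tilde R_v+\tilde R_u>I(V;S|Q)+I(U;S|Q)+I(U;V|S,Q)$. Decoder~1 declares $\hat m_1$ if it is the unique message with $(q^n,u^n(\hat m_1,l_u),y_1^n)\in\mc{T}^{(n)}_{\epsilon}$ for some $l_u$; decoder~2 jointly decodes $(l_v,\hat m_2)$ by requiring $(q^n,v^n(l_v),x_2^n(\hat m_2),y_2^n)\in\mc{T}^{(n)}_{\epsilon}$ for some $l_v$, but counts an error only in $\hat m_2$ --- recovering the interference auxiliary $V$ is a means to cancel interference, not a goal, which is exactly the ``receiver uses the interferer's coordination codebook'' idea from Section~\ref{sec:outline}.

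For the error analysis I would set $(M_1,M_2)=(1,1)$ with chosen indices $(L_v,L_u)$ and, beyond the encoding-failure event and the event that the transmitted sequences are jointly atypical with $(Y_1^n,Y_2^n)$ (handled by the conditional typicality / Markov lemma), bound the remaining events with the packing lemma and its multivariate version. The surviving rate conditions are $R_1+\tilde R_u<I(U;Y_1|Q)$ at decoder~1 and, at decoder~2, $R_2<I(X_2;Y_2|V,Q)$ (wrong $m_2$, correct $l_v$) and $R_2+\tilde R_v<I(V,X_2;Y_2|Q)$ (wrong $m_2$, wrong $l_v$), where I use $I(X_2;V|Q)=0$ because $X_2$ is independent of $(U,V,S)$ given $Q$. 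Eliminating $\tilde R_u,\tilde R_v\ge 0$ from the above inequalities by Fourier--Motzkin yields exactly \eqref{eq:gen_ach}: $R_1<I(U;Y_1|Q)-I(U;S|Q)$ and $R_2<I(X_2;Y_2|V,Q)$ come directly, $R_2<I(V,X_2;Y_2|Q)-I(V;S|Q)$ from $\tilde R_v>I(V;S|Q)$ against the joint-decoding bound, and the sum-rate bound from combining the \emph{joint} covering condition with $R_1+\tilde R_u<I(U;Y_1|Q)$ and $R_2+\tilde R_v<I(V,X_2;Y_2|Q)$, using the identity $I(V;S|Q)+I(U;V|S,Q)=I(U,S;V|Q)$.

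I expect the main obstacle to be the two-layer joint-typicality encoding at encoder~1: the coordination codeword and the message codeword must be chosen \emph{jointly} to be typical with the state, so one needs the multivariate covering lemma rather than a single Gelfand--Pinsker step, and the mild statistical dependence this introduces among the remaining codewords must be controlled in the packing-lemma steps at both decoders --- precisely the bookkeeping that makes extending the original Han--Kobayashi scheme cumbersome but which the multicoding viewpoint organizes cleanly. A secondary point requiring care is checking that the Fourier--Motzkin elimination produces no additional binding inequalities, which relies on the Markov relations $U,V-(X_1,S)$ and $X_2-Q-(U,V,S,X_1,Y_1)$ and on the non-negativity of the auxiliary rates.
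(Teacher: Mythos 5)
Your proposal is correct and reproduces the paper's own scheme essentially verbatim: encoder~1 generates an unbinned $v^n$ codebook and a binned $u^n$ multicodebook, jointly covers $s^n$ with a pair $(v^n,u^n)$, passes them through $p(x_1|u,v,s,q)$, decoder~1 packs over $u^n$, decoder~2 packs jointly over $(v^n,x_2^n)$ while treating $l_v$ as a nuisance index, and Fourier--Motzkin on the same six inequalities gives \eqref{eq:gen_ach}. The only cosmetic difference is that you write the joint covering constraint in the expanded form $I(V;S|Q)+I(U;S|Q)+I(U;V|S,Q)$ whereas the paper keeps it as $I(U;S)+I(U,S;V)$ (equivalent by the chain rule) and introduces $Q$ via coded time-sharing at the end rather than from the start.
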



For the injective deterministic S-D Z-IC, we can identify natural choices for the auxiliary random variables in Theorem~\ref{thm:gen_ach} that, in fact, yield the capacity region. This result is stated in the following theorem.

\begin{thm}\label{thm:cap}
The capacity region of the injective deterministic S-D Z-IC in Fig.~\ref{fig:model_state_det} is the set of rate pairs $(R_1,R_2)$ that satisfy
\begin{equation}\label{eq:cap}
\begin{split}
R_1 & \leq H(Y_1|S,Q),\\
R_2 & \leq  H(Y_2|T_1,Q),\\
R_2 & \leq  H(Y_2|Q) - I(T_1;S|Q),\\
R_1 + R_2 & \leq H(Y_1|T_1,S,Q)+H(Y_2|Q)-I(T_1;S|Q),
\end{split}
\end{equation}
for some pmf $p(q)p(x_1|s,q)p(x_2|q),$ where $|\mc{Q}|\leq 4$.
\end{thm}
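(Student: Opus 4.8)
The plan is to prove Theorem~\ref{thm:cap} in two parts: achievability as a corollary of Theorem~\ref{thm:gen_ach}, and a matching converse by a genie-aided single-letterization. \emph{Achievability.} I would evaluate the inner bound \eqref{eq:gen_ach} at the specific choice $U = Y_1 = y_1(X_1,S)$ and $V = T_1 = t_1(X_1,S)$, with input pmf $p(q)p(x_1|s,q)p(x_2|q)$ and with $p(x_1|u,v,s,q)$ taken to be the conditional law of $X_1$ that this induces; since $U$ and $V$ are deterministic functions of $(X_1,S)$, this is a legitimate instance of the pmf allowed in Theorem~\ref{thm:gen_ach}. It then only remains to simplify the four bounds using the deterministic and injective structure: $I(U;Y_1|Q)-I(U;S|Q)=H(Y_1|Q)-I(Y_1;S|Q)=H(Y_1|S,Q)$; $I(X_2;Y_2|V,Q)=H(Y_2|T_1,Q)$ and $I(V,X_2;Y_2|Q)=H(Y_2|Q)$ because $Y_2$ is a deterministic function of $(X_2,T_1)$; and, by the chain rule, $I(U;Y_1|Q)+I(V,X_2;Y_2|Q)-I(U;S|Q)-I(U,S;V|Q)=H(Y_1|T_1,S,Q)+H(Y_2|Q)-I(T_1;S|Q)$. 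These are exactly the four right-hand sides in \eqref{eq:cap}, so every such rate pair is achievable.

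\emph{Converse.} Starting from Fano's inequality $nR_j\le I(M_j;Y_j^n)+n\epsilon_n$, I would bound each inequality separately and then single-letterize with $Q\sim\mathrm{Unif}[1:n]$ independent of everything, setting $S=S_Q$, $X_j=X_{jQ}$, $Y_j=Y_{jQ}$, $T_1=T_{1Q}$. For $R_1$: since $M_1$ is independent of $S^n$ and $Y_1^n$ is a deterministic function of $(M_1,S^n)$, $I(M_1;Y_1^n)\le I(M_1;Y_1^n\mid S^n)\le H(Y_1^n\mid S^n)\le\sum_i H(Y_{1i}\mid S_i)$. For $R_2\le H(Y_2\mid T_1,Q)$: $I(M_2;Y_2^n)\le I(M_2;Y_2^n\mid T_1^n)\le H(Y_2^n\mid T_1^n)\le\sum_i H(Y_{2i}\mid T_{1i})$, using that $M_2$ is independent of $T_1^n$. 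For $R_2\le H(Y_2\mid Q)-I(T_1;S\mid Q)$, the key step is the identity $H(Y_2^n\mid M_2)=H(T_1^n)$: conditioned on $M_2$ the input $X_2^n$ is fixed and $y_2(x_2,\cdot)$ is injective, so $H(Y_2^n\mid M_2)=H(T_1^n\mid X_2^n)=H(T_1^n)$ because $T_1^n$ (a function of $(M_1,S^n)$) is independent of $M_2$; hence $I(M_2;Y_2^n)=H(Y_2^n)-H(T_1^n)\le\sum_iH(Y_{2i})-I(T_1^n;S^n)$, and since $S^n$ is i.i.d., $I(T_1^n;S^n)=H(S^n)-H(S^n\mid T_1^n)\ge\sum_i I(T_{1i};S_i)$.

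The main obstacle is the sum-rate bound. My plan is to add the two Fano bounds and use $I(M_1;Y_1^n)\le H(Y_1^n\mid S^n)$ together with $I(M_2;Y_2^n)=H(Y_2^n)-H(T_1^n)$ to obtain $I(M_1;Y_1^n)+I(M_2;Y_2^n)\le H(Y_1^n\mid S^n)+H(Y_2^n)-H(T_1^n)$, and then the crucial chain-rule expansion $H(Y_1^n\mid S^n)\le H(Y_1^n,T_1^n\mid S^n)=H(T_1^n\mid S^n)+H(Y_1^n\mid T_1^n,S^n)$; the terms $H(T_1^n\mid S^n)-H(T_1^n)=-I(T_1^n;S^n)$ then combine, and single-letterizing as above ($H(Y_1^n\mid T_1^n,S^n)\le\sum_iH(Y_{1i}\mid T_{1i},S_i)$, $H(Y_2^n)\le\sum_iH(Y_{2i})$, $I(T_1^n;S^n)\ge\sum_iI(T_{1i};S_i)$) yields $R_1+R_2\le H(Y_1\mid T_1,S,Q)+H(Y_2\mid Q)-I(T_1;S\mid Q)$. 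Finally I would verify that the induced joint law factors as $p(q)p(x_1\mid s,q)p(x_2\mid q)$ — this uses that $X_2^n$ depends only on $M_2$, which is independent of $S^n$, so $X_{2Q}$ is conditionally independent of $S_Q$ given $Q$ — and invoke the support lemma (Fenchel--Eggleston--Carath\'eodory) to reduce to $|\mc{Q}|\le 4$ while preserving the four right-hand sides. The only real subtleties are the use of injectivity to replace $H(Y_2^n\mid M_2)$ by $H(T_1^n)$, and keeping the direction of the inequality on $I(T_1^n;S^n)$ correct when passing to single letters.
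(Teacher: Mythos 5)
Your proof is correct and follows essentially the same path as the paper's: achievability by specializing Theorem~\ref{thm:gen_ach} with $U=Y_1$, $V=T_1$ (you merely fill in the algebraic simplification that the paper leaves implicit), and a converse using the same genie step (introducing $T_1^n$ alongside $Y_1^n$), the same injectivity identity $H(Y_2^n\mid M_2)=H(T_1^n)$, and the same use of i.i.d.\ $S^n$ to single-letterize $-I(T_1^n;S^n)\le -\sum_i I(T_{1i};S_i)$. Your explicit check that the induced law factors as $p(q)p(x_1|s,q)p(x_2|q)$ (via $X_{2Q}\perp S_Q\mid Q$) is a detail the paper glosses over, but it is not a different argument.
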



\begin{remark} Note that the capacity region remains unchanged even if the first receiver is provided with the state information. The proof of this theorem is presented in subsection~\ref{subsec:proof_thm_cap}.
\end{remark}

\subsection{Proof of Theorem~\ref{thm:gen_ach}}\label{subsec:proofthm1}
Fix $p(u,v|s)p(x_1|u,v,s)p(x_2)$ and choose $0<\epsilon'<\epsilon$. 

\subsubsection*{Codebook Generation}
\begin{itemize}
\item Encoder 2 generates $2^{nR_2}$ codewords $x_2^n(m_2), m_2\in[1:2^{nR_2}]$ i.i.d. according to $p(x_2)$.
\item Encoder 1 generates $2^{n(R_1+R_1')}$ codewords $u^n(m_1,l_1)$ i.i.d. according to $p(u)$, where $m_1\in[1:2^{nR_1}]$ and $l_1\in[1:2^{nR_1'}]$. Encoder 1 also generates $2^{nR_2'}$ codewords $v^n(l_2), l_2\in[1:2^{nR_2'}]$ i.i.d. according to $p(v)$.
\end{itemize}

\subsubsection*{Encoding}
\begin{itemize}
\item To transmit message $m_2$, encoder~2 transmits $x_2^n(m_2)$.
\item Assume that the message to be transmitted by encoder~1 is $m_1$. After observing $s^n$, it finds a pair $(l_1,l_2)$ such that $(u^n(m_1,l_1),v^n(l_2),s^n)\in\mc{T}^{(n)}_{\epsilon'}$. Then it transmits $x_1^n$, which is generated i.i.d. according to $p(x_1|u,v,s)$.
\end{itemize}

\subsubsection*{Decoding}
\begin{itemize}
\item Decoder 1 finds a unique $\hat{m}_1$ such that $(u^n(\hat{m}_1,l_1),y_1^n)\in\mc{T}^{(n)}_{\epsilon}$ for some $l_1$.
\item Decoder 2 finds a unique $\hat{m}_2$ such that $(x_2^n(\hat{m}_2),v^n(l_2),y_2^n)\in\mc{T}_\epsilon^{(n)}$ for some $l_2$.
\end{itemize}

\subsubsection*{Probability of Error}
Due to the symmetry of the code, the average probability of error $\msf{P}(\mc{E})$ is equal to $\msf{P}(\mc{E}|M_1,M_2)$, so we can assume $(M_1,M_2) = (1,1)$ and analyze $\msf{P}(\mc{E}|1,1)$. Let $(L_1,L_2)$ denote the pair of indices chosen by encoder 1 such that $(U^n(1,L_1),V^n(L_2),S^n)\in\mc{T}^n_{\epsilon'}$. 

We now define events that cover the error event:
\begin{IEEEeqnarray*}{rCl}
\mc{E}_1 & \triangleq & \{(U^n(1,l_1),V^n(l_2),S^n)\notin\mc{T}^{(n)}_{\epsilon'} \text{ for all } l_1, l_2\}, \label{eq:E1}\\
\mc{E}_2 & \triangleq & \{(U^n(1,L_1),Y_1^n)\notin\mc{T}^{(n)}_{\epsilon}\}, \label{eq:E2}\\
\mc{E}_3 & \triangleq & \{(U^n(m_1,l_1),Y_1^n)\in\mc{T}^{(n)}_{\epsilon} \text{ for some }m_1\neq 1, l_1\}, \label{eq:E3}\\
\mc{E}_4 & \triangleq & \{(X_2^n(1),V^n(L_2),Y_2^n)\notin\mc{T}^{(n)}_{\epsilon}\} \label{eq:E4},\\
\mc{E}_5 & \triangleq & \{(X_2^n(m_2),V^n(l_2),Y_2^n)\in\mc{T}^{(n)}_{\epsilon} \text{ for some }m_2\neq 1, l_2\} \label{eq:E5}.
\end{IEEEeqnarray*}

Since an error occurs only if at least one of the above events occur, we have the following upper bound on the average probability of error:
$$\msf{P}(\mc{E}) \leq \msf{P}(\mc{E}_1) + \msf{P}(\mc{E}_2\cap\mc{E}_1^c) + \msf{P}(\mc{E}_3) + \msf{P}(\mc{E}_4\cap\mc{E}_1^c) + \msf{P}(\mc{E}_5).$$

Similar to the proof of the mutual covering lemma \cite[Ch. 8]{Gam12}, we can show that $\msf{P}(\mc{E}_1)\rightarrow 0$ as $n\rightarrow\infty$ if 
\begin{IEEEeqnarray}{rCl}
R_1' & > & I(U;S) + \delta(\epsilon'),\label{eq:ach1_state}\\
R_2' & > & I(V;S) + \delta(\epsilon'),\label{eq:ach2_state}\\
R_1' + R_2' & > & I(U;S) + I(U,S;V) + \delta(\epsilon')\label{eq:ach3_state},
\end{IEEEeqnarray}
where $\delta(\epsilon')\rightarrow 0$ as $\epsilon'\rightarrow 0.$

By the conditional typicality lemma \cite[Ch. 2]{Gam12}, $\msf{P}(\mc{E}_2\cap\mc{E}_1^c)$ and $\msf{P}(\mc{E}_4\cap\mc{E}_1^c)$ both tend to zero as $n\rightarrow\infty$.

By the packing lemma \cite[Ch. 3]{Gam12}, for $\msf{P}(\mc{E}_3)\rightarrow 0$, we require
\begin{equation}\label{eq:ach4_state} R_1+R_1' < I(U;Y_1) - \delta(\epsilon),\end{equation}
and for $\msf{P}(\mc{E}_5)\rightarrow 0$, we require
\begin{IEEEeqnarray}{rCl}
R_2 & < & I(X_2;Y_2|V) - \delta(\epsilon),\label{eq:ach5_state}\\
R_2 + R_2' & < & I(V,X_2;Y_2) - \delta(\epsilon),\label{eq:ach6_state}
\end{IEEEeqnarray}
where $\delta(\epsilon)\rightarrow 0$ as $\epsilon\rightarrow 0.$ Hence, $\msf{P}(\mc{E})\rightarrow 0$ if \eqref{eq:ach1_state}, \eqref{eq:ach2_state}, \eqref{eq:ach3_state}, \eqref{eq:ach4_state}, \eqref{eq:ach5_state}, \eqref{eq:ach6_state} are satisfied. 
Allowing coded-time sharing with a time-sharing random variable $Q$ and eliminating $R_1',R_2'$ via Fourier-Motzkin elimination, we obtain the region \eqref{eq:gen_ach}.\hfill\IEEEQED

\subsection{Proof of Theorem~\ref{thm:cap}}\label{subsec:proof_thm_cap}
Achievability follows from Theorem~\ref{thm:gen_ach} by choosing $U=Y_1$ and $V=T_1$. These choices are valid since encoder 1 knows $(M_1,S^n)$, which determines $T_1^n$ and $Y_1^n$. We now prove the converse.

Given a sequence of codes that achieves reliable communication (i.e. $P_e^{(n)}\rightarrow 0$ as $n\rightarrow\infty$) at rates $(R_1,R_2)$, we have, by Fano's inequality:
\begin{IEEEeqnarray*}{c}
H(M_1|Y_1^n) \leq n\epsilon_n,\\
H(M_2|Y_2^n) \leq n\epsilon_n,
\end{IEEEeqnarray*}
where $\epsilon_n\rightarrow 0$ as $n\rightarrow\infty.$

Using these, we can establish an upper bound on $R_1$ as follows,
\begin{IEEEeqnarray}{rCl}
nR_1 & = & H(M_1) \nonumber\\
& = & H(M_1|S^n) \nonumber\\
& \leq & I(M_1;Y_1^n|S^n) + n\epsilon_n \nonumber\\
& \leq & H(Y_1^n|S^n) + n\epsilon_n \nonumber\\
& \leq & \sum_{i=1}^n H(Y_{1i}|S_i) + n\epsilon_n. \nonumber
\end{IEEEeqnarray}

A simple upper bound on $R_2$ is established in the following:
\begin{IEEEeqnarray}{rCl}
nR_2 & = & H(M_2)\nonumber\\
& = & H(M_2|T_1^n)\nonumber\\
& \leq & I(M_2;Y_2^n|T_1^n) + n\epsilon_n\nonumber\\
& \leq & H(Y_2^n|T_1^n) + n\epsilon_n\nonumber\\
& \leq & \sum_{i=1}^n H(Y_{2i}|T_{1i}) + n\epsilon_n.\nonumber
\end{IEEEeqnarray}

For the second upper bound on $R_2$, consider the following:
\begin{IEEEeqnarray}{rCl}
nR_2 & = & H(M_2)\nonumber\\
& = & H(M_2) + H(Y_2^n|M_2) - H(Y_2^n|M_2) \nonumber\\
& = & H(Y_2^n) + H(M_2|Y_2^n) - H(Y_2^n|M_2) \nonumber\\
& \leq & \sum_{i=1}^nH(Y_{2i}) + n\epsilon_n - H(Y_2^n|M_2)\nonumber\\
& \stackrel{(a)}{=} & \sum_{i=1}^nH(Y_{2i}) + n\epsilon_n - H(T_1^n|M_2)\nonumber\\
& \stackrel{(b)}{=} & \sum_{i=1}^nH(Y_{2i}) + n\epsilon_n - H(T_1^n)\nonumber\\
& \leq & \sum_{i=1}^nH(Y_{2i}) + n\epsilon_n - I(T_1^n;S^n)\nonumber\\
& = & \sum_{i=1}^nH(Y_{2i}) + n\epsilon_n - H(S^n) + H(T_1^n|S^n)\nonumber\\
& \leq & \sum_{i=1}^nH(Y_{2i}) + n\epsilon_n - H(S^n) + \sum_{i=1}^nH(T_{1i}|S_i)\nonumber\\
& \stackrel{(c)}{=} & \sum_{i=1}^nH(Y_{2i}) + n\epsilon_n - \sum_{i=1}^nH(S_i) + \sum_{i=1}^nH(T_{1i}|S_i)\nonumber\\
& = & \sum_{i=1}^nH(Y_{2i}) + n\epsilon_n - \sum_{i=1}^nI(T_{1i};S_i)\nonumber\\
\end{IEEEeqnarray}
where step $(a)$ follows by the injectivity property, step $(b)$ follows because $T_1^n$ is independent of $M_2$, and step $(c)$ follows because $S^n$ is an i.i.d. sequence.


We now establish an upper bound on the sum-rate.

 \begin{IEEEeqnarray}{rL}
n(R_1+R_2) & = H(M_1|S^n) + H(M_2)\nonumber\\
& \leq I(M_1;T_1^n,Y_1^n|S^n) + n\epsilon_n + H(Y_2^n) + H(M_2|Y_2^n) - H(Y_2^n|M_2)\nonumber\\
& \leq I(M_1;T_1^n,Y_1^n|S^n) + n\epsilon_n + H(Y_2^n) + n\epsilon_n - H(Y_2^n|M_2)\nonumber\\
& \stackrel{(a)}{\leq} H(T_1^n,Y_1^n|S^n) + H(Y_2^n) - H(T_1^n|M_2)+ 2n\epsilon_n\nonumber\\
& \stackrel{(b)}{=} H(T_1^n,Y_1^n|S^n) + H(Y_2^n) - H(T_1^n)+ 2n\epsilon_n\nonumber\\
& = H(Y_1^n|S^n,T_1^n) + H(Y_2^n) - I(T_1^n;S^n)+ 2n\epsilon_n\nonumber\\
& \stackrel{(c)}{\leq} \sum_{i=1}^n H(Y_{1i}|S_i,T_{1i}) + \sum_{i=1}^n H(Y_{2i}) - \sum_{i=1}^nI(T_{1i};S_i)+ 2n\epsilon_n\nonumber
\end{IEEEeqnarray}
where as before, steps $(a)$, $(b)$ and $(c)$ follow because of injectivity property, independence of $T_1^n$ and $M_2$, and i.i.d. state respectively.

From the four bounds established in this section, we can complete the converse by introducing an independent time-sharing random variable $Q$ uniformly distributed on $[1:n]$ and defining $X_1$, $T_1$, $S$, $X_2$, $Y_1$, $Y_2$ to be $X_{1Q}$, $T_{1Q}$, $S_{Q}$, $X_{2Q}$, $Y_{1Q}$, $Y_{2Q}$ respectively. \hfill\IEEEQED

\subsection{Example: Modulo-Additive State-Dependent Z-Interference Channel}\label{subsec:modulo}

\begin{thm}\label{thm:modulo}
The capacity region of the modulo-additive S-D Z-IC in Fig.~\ref{fig:model_modulo} is given by the convex closure of the rate pairs $(R_1,R_2)$ satisfying
\begin{equation}\label{eq:cap_modulo}
\begin{split}
R_1 & < (1-\lambda)\log |\mc{X}| + \lambda H(\bm{p}),\\
R_2 & < \log |\mc{X}| - H\left(\lambda \bm{p} + (1-\lambda)\bm{\delta}_0\right),
\end{split}
\end{equation}
for some $\bm{p}\in\mc{P}_{\mc{X}}$, where $\mc{P}_{\mc{X}}$ denotes the probability simplex corresponding to $\mc{X}$, $H(\bm{p})$ stands for the entropy of the pmf $\bm{p}$ and $\bm{\delta}_0$ denotes the pmf that has unit mass at $0$.
\end{thm}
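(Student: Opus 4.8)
The plan is to derive Theorem~\ref{thm:modulo} from the capacity region~\eqref{eq:cap} of the injective deterministic S-D Z-IC, establishing the two inclusions separately. The preliminary observation to record is that for the modulo-additive model $Y_1=X_1$, $T_1=S\cdot X_1$ and $Y_2=X_2\oplus T_1$, with $X_2$ independent of $(X_1,S)$ given $Q$; hence, writing $\bm{p}_1$ for the conditional pmf of $X_1$ given $\{S=1,Q=q\}$, one has that $T_1$ given $Q=q$ has pmf $(1-\lambda)\bm{\delta}_0+\lambda\bm{p}_1$, and $H(Y_1|S,Q=q)=H(X_1|S,q)$, $H(Y_1|T_1,S,Q=q)=(1-\lambda)H(X_1|S=0,q)$, $H(Y_2|T_1,Q=q)=H(X_2|q)$, $H(Y_2|Q=q)=H(X_2\oplus T_1|q)$, and $I(T_1;S|Q=q)=H(T_1|q)-\lambda H(\bm{p}_1)$.

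For achievability, for each $\bm{p}\in\mc{P}_{\mc{X}}$ I would apply~\eqref{eq:cap} with $Q$ trivial, $X_2$ uniform on $\mc{X}$, and $X_1$ uniform on $\mc{X}$ when $S=0$ and distributed as $\bm{p}$ when $S=1$. Substituting the identities above, the four bounds of~\eqref{eq:cap} become $R_1\le(1-\lambda)\log|\mc{X}|+\lambda H(\bm{p})$, $R_2\le\log|\mc{X}|$, $R_2\le\log|\mc{X}|-H((1-\lambda)\bm{\delta}_0+\lambda\bm{p})+\lambda H(\bm{p})$, and $R_1+R_2\le(1-\lambda)\log|\mc{X}|+\log|\mc{X}|-H((1-\lambda)\bm{\delta}_0+\lambda\bm{p})+\lambda H(\bm{p})$. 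The third is weaker than the second bound of~\eqref{eq:cap_modulo}, and the fourth equals exactly the sum of the two bounds of~\eqref{eq:cap_modulo}; therefore any $(R_1,R_2)$ with $R_1<(1-\lambda)\log|\mc{X}|+\lambda H(\bm{p})$ and $R_2<\log|\mc{X}|-H((1-\lambda)\bm{\delta}_0+\lambda\bm{p})$ satisfies~\eqref{eq:cap} strictly and is achievable. Since the capacity region is closed and convex, it contains the convex closure over $\bm{p}$ of these open rectangles, which is precisely~\eqref{eq:cap_modulo}.

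For the converse, I would start from an arbitrary pmf $p(q)p(x_1|s,q)p(x_2|q)$ in Theorem~\ref{thm:cap} and first perform a reduction: for each value of $Q$, replacing the conditional law of $X_1$ given $\{S=0\}$ and the law of $X_2$ by uniform laws can only weakly increase all four right-hand sides of~\eqref{eq:cap} — the first replacement leaves $H(Y_2|T_1,Q)$, $H(Y_2|Q)$ and $I(T_1;S|Q)$ unchanged while raising $H(Y_1|S,Q)$ and $H(Y_1|T_1,S,Q)$, and the second raises $H(Y_2|T_1,Q)$ and $H(Y_2|Q)$ to $\log|\mc{X}|$ without affecting the $Y_1$-terms. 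Hence any achievable $(R_1,R_2)$ obeys $R_1\le\bar A$, $R_2\le\bar B$, $R_2\le\bar C$, $R_1+R_2\le\bar D$, where, letting $W$ denote the conditional pmf of $X_1$ given $\{S=1,Q\}$ viewed as a random element of $\mc{P}_{\mc{X}}$, $\bar A=(1-\lambda)\log|\mc{X}|+\lambda\mb{E}[H(W)]$, $\bar B=\log|\mc{X}|$, $\bar C=\log|\mc{X}|+\lambda\mb{E}[H(W)]-\mb{E}[H((1-\lambda)\bm{\delta}_0+\lambda W)]$ and $\bar D=(1-\lambda)\log|\mc{X}|+\bar C$. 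It then remains to show this polyhedron lies inside~\eqref{eq:cap_modulo}. Since~\eqref{eq:cap_modulo} is closed, convex and downward-closed in the positive quadrant, it suffices to verify, for every $\mu_1,\mu_2\ge0$, that $\mu_1R_1+\mu_2R_2\le\sigma(\mu_1,\mu_2):=\sup_{\bm{p}}\{\mu_1[(1-\lambda)\log|\mc{X}|+\lambda H(\bm{p})]+\mu_2[\log|\mc{X}|-H((1-\lambda)\bm{\delta}_0+\lambda\bm{p})]\}$. If $\mu_1\ge\mu_2$, write $\mu_1R_1+\mu_2R_2=\mu_2(R_1+R_2)+(\mu_1-\mu_2)R_1\le\mu_2\bar D+(\mu_1-\mu_2)\bar A$, and note that $\mu_2\bar D+(\mu_1-\mu_2)\bar A=\mb{E}_W\{\mu_1[(1-\lambda)\log|\mc{X}|+\lambda H(W)]+\mu_2[\log|\mc{X}|-H((1-\lambda)\bm{\delta}_0+\lambda W)]\}\le\sigma(\mu_1,\mu_2)$, because the bracketed quantity is at most $\sigma(\mu_1,\mu_2)$ for every realization of $W$. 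If $\mu_2\ge\mu_1$, write $\mu_1R_1+\mu_2R_2=\mu_1(R_1+R_2)+(\mu_2-\mu_1)R_2\le\mu_1\bar D+(\mu_2-\mu_1)\log|\mc{X}|$, and use the concavity bound $H((1-\lambda)\bm{\delta}_0+\lambda\bm{p})\ge\lambda H(\bm{p})$ to bound this by $\mu_1(1-\lambda)\log|\mc{X}|+\mu_2\log|\mc{X}|$, which is $\le\sigma(\mu_1,\mu_2)$ by taking $\bm{p}=\bm{\delta}_0$ in the supremum. Note that no cardinality bound on $Q$ is needed, since $Q$ is absorbed into the supremum over $\bm{p}$.

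The hard part is the last step of the converse. The $R_2$-bound coming directly out of~\eqref{eq:cap} carries an extra $+\lambda H(\bm{p})$ compared with~\eqref{eq:cap_modulo}, so for a fixed input distribution the single-letter region of Theorem~\ref{thm:cap} is strictly larger than~\eqref{eq:cap_modulo}; the two regions coincide only after the convex closure is taken, and the content of the proof is exactly the support-function comparison above that makes this equivalence precise. Equivalently, one may enumerate the finitely many corner points of the reduced polyhedron and check that each is dominated by a point of~\eqref{eq:cap_modulo} or by a convex combination of two such points, using $H(X_2\oplus T_1|Q)\le\log|\mc{X}|$ and $H(X_2\oplus T_1|Q)\ge H(T_1|Q)$. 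All remaining steps are routine substitutions based on $Y_1=X_1$, $Y_2=X_2\oplus T_1$ and the independence of $X_2$ from $(X_1,S)$.
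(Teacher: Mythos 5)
Your proof is correct and follows the same overall structure as the paper's: start from the capacity region of the injective deterministic S-D Z-IC (Theorem~\ref{thm:cap}), evaluate each constraint for the modulo-additive channel, observe that uniformizing $p(x_1|s=0,q)$ and $p(x_2|q)$ can only enlarge the region and renders the constraint $R_2 \le H(Y_2|T_1,Q)$ redundant, and then show that the reduced polyhedron parameterized by the conditional law of $X_1$ given $\{S=1\}$ has the same convex closure as~\eqref{eq:cap_modulo}. The only place you diverge is in the final inclusion. The paper verifies it geometrically: for each $\bm{p}_{1,1}$, the polytope with vertices essentially at $\bigl(\bar A,\log|\mc{X}|-H((1-\lambda)\bm{\delta}_0+\lambda\bm{p}_{1,1})\bigr)$ and $\bigl((1-\lambda)\log|\mc{X}|,\bar C\bigr)$ is shown to sit inside the convex hull of the rectangle from~\eqref{eq:cap_modulo} with $\bm{p}=\bm{p}_{1,1}$ and the rectangle with $\bm{p}=\bm{\delta}_0$ (the latter dominating the second vertex because $\bar C\le\log|\mc{X}|$). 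You instead run an LP-duality/support-function argument, bounding $\mu_1R_1+\mu_2R_2$ against $\sigma(\mu_1,\mu_2)$ via a case split on $\mu_1\gtrless\mu_2$, and using the same two facts (the pointwise bracket bound for $\mu_1\ge\mu_2$, and $H((1-\lambda)\bm{\delta}_0+\lambda\bm{p})\ge\lambda H(\bm{p})$ together with $\bm{p}=\bm{\delta}_0$ for $\mu_2\ge\mu_1$). Both verifications are valid and hinge on the same underlying observation that the sum bound $\bar D=\bar A+\bar C-\lambda H(\bm{p})$ has the extra $-\lambda H(\bm{p})$ slack that disappears only after the convex hull is taken. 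The paper's argument is more visual (convex hull of two rectangles), while yours is more systematic and generalizes mechanically; one small stylistic point is that you should phrase the target as the convex closure of~\eqref{eq:cap_modulo}, since~\eqref{eq:cap_modulo} itself is a union of rectangles and not convex.
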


The capacity region when $\mc{X}=\{0,1\}$ and $S$ is i.i.d. Ber$\left(\frac{1}{2}\right)$ is shown in Figure~\ref{fig:modulo}.


\subsection*{Proof of Theorem~\ref{thm:modulo}}


\begin{figure}[!t]
\centering
\input{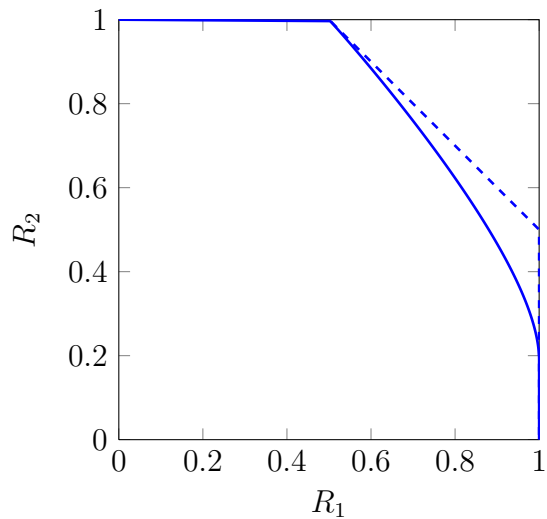}
\caption{Capacity Region with $\mc{X}=\{0,1\}$ and $S$ i.i.d. Ber$\left(\frac{1}{2}\right).$ The dotted line shows the capacity region when all nodes have state information. Note that the maximal sum-rate of $1.5$ bits/channel use is achievable with state information only at the interfering Tx.}
\label{fig:modulo}
\end{figure}

Consider the capacity region stated in Theorem~\ref{thm:cap}. Let $\bm{p}_{1,0}$, $\bm{p}_{1,1}$ and $\bm{p}_{2}$, all in $\mc{P}_{\mc{X}}$, be used to denote the pmf's ${p(x_1|s=0,q)}$, $p(x_1|s=1,q)$ and $p(x_2|q)$ respectively. Evaluating each of the constraints in \eqref{eq:cap} gives us the following expression for the capacity region:
\begin{equation}\label{eq:cap_modulo_1}
\begin{split}
R_1 & < (1-\lambda)H(\bm{p}_{1,0}) + \lambda H(\bm{p}_{1,1}),\\
R_2 & < H(\bm{p}_{2}),\\ 
R_2 & < H\left((1-\lambda)\bm{p}_{2} + \lambda\widetilde{\bm{p}}\right) + \lambda H(\bm{p}_{1,1}) \\
&\quad\quad - H\left(\lambda \bm{p}_{1,1} + (1-\lambda)\bm{\delta}_0\right),\\
R_1 + R_2 & < (1-\lambda)H(\bm{p}_{1,0})+H\left((1-\lambda)\bm{p}_{2} + \lambda\widetilde{\bm{p}}\right)\\
&\quad\quad + \lambda H(\bm{p}_{1,1}) - H\left(\lambda \bm{p}_{1,1} + (1-\lambda)\bm{\delta}_0\right),
\end{split}
\end{equation}
where $\widetilde{\bm{p}}\in\mc{P}_{\mc{X}}$ is a pmf that is defined as 
$$\widetilde{\bm{p}}(k) = \sum_{i=0}^{|\mc{X}|-1} \bm{p}_{1,1}(i)\bm{p}_{2}(k-i),\quad 0\leq k\leq |\mc{X}|-1,$$ and $k-i$ should be understood to be $(k-i)\text{ mod } |\mc{X}|$.

Firstly, we note that $\bm{p}_{1,0}$ should be chosen as the pmf of the uniform distribution to maximize $H(\bm{p}_{1,0})$, thus maximizing the RHS of the constraints in \eqref{eq:cap_modulo_1}. Similarly, $\bm{p}_2$ should also be chosen to be the pmf of the uniform distribution. Then, we can also remove the first constraint on $R_2$, since it is rendered redundant by the other constraint on $R_2$.
Thus, the capacity region is given by the convex closure of $(R_1,R_2)$ satisfying
\begin{equation}\label{eq:cap_modulo_3}
\begin{split}
R_1 & < (1-\lambda)\log(|\mc{X}|) + \lambda H(\bm{p}_{1,1}),\\
R_2 & < \log(|\mc{X}|) + \lambda H(\bm{p}_{1,1}) - H\left(\lambda \bm{p}_{1,1} + (1-\lambda)\bm{\delta}_0\right),\\
R_1 + R_2 & < (2-\lambda)\log(|\mc{X}|)+ \lambda H(\bm{p}_{1,1})\\
&\quad\quad\quad\quad - H\left(\lambda \bm{p}_{1,1} + (1-\lambda)\bm{\delta}_0\right),
\end{split}
\end{equation} for $\bm{p}_{1,1}\in\mc{P}_{\mc{X}}.$

For any $\bm{p}$, the region in \eqref{eq:cap_modulo} is contained in the region in \eqref{eq:cap_modulo_3} for $\bm{p}_{1,1}=\bm{p}$. Hence, the convex closure of \eqref{eq:cap_modulo} is contained in the convex closure of \eqref{eq:cap_modulo_3}.

However, also note that the region in \eqref{eq:cap_modulo_3} for any $\bm{p}_{1,1}$ is contained in the convex hull of two regions, one obtained by setting $\bm{p} = \bm{p}_{1,1}$ in \eqref{eq:cap_modulo} and the other obtained by setting $\bm{p}=\bm{\delta}_0$ in \eqref{eq:cap_modulo}. Hence, the convex closure of \eqref{eq:cap_modulo_3} is also contained in the convex closure of \eqref{eq:cap_modulo}. This concludes the proof of Theorem~\ref{thm:modulo}.\hfill\IEEEQED

\begin{remark}
The optimal sum-rate $(2-\lambda)\log |\mc{X}|$ is achieved by choosing $\bm{p}=\bm{\delta}_0$. This corresponds to setting the transmitted symbols of the first transmitter to $0$ when $S=1$ so that it does not interfere with the second transmission. The first transmitter then treats these symbols as stuck to $0$ and performs Gelfand-Pinsker coding. The second transmitter transmits at rate $\log(|\mc{X}|)$ bits/channel use. It can be easily verified that this is also the optimal sum-rate when all nodes are provided with the state~information. Thus, for this channel, the sum-capacity when all nodes have state information is the same as that when only encoder~1 has state information.
\end{remark}

\begin{remark}
Finally, we note that there is also another way to achieve the capacity region of the modulo additive S-D Z-IC. For this, first recall that to get the capacity region expression in Theorem~\ref{thm:cap}, we set the auxiliary random variables $U$ and $V$ in the expression in Theorem~\ref{thm:gen_ach} to $Y_1$ and $T_1$ respectively. Another choice, which corresponds to standard Gelfand-Pinsker coding for the first transmitter-receiver pair and treating interference as noise at the second receiver is to choose $V=\phi$ in Theorem~\ref{thm:gen_ach}. This gives us the following achievable region:
\begin{equation}\label{eq:int_noise}
\begin{split}
R_1 & < I(U;Y_1|Q)-I(U;S|Q),\\
R_2 & < I(X_2;Y_2|Q),
\end{split}
\end{equation} for some pmf $p(q)p(u|s,q)p(x_1|u,s,q)p(x_2|q)$. We can now see that for the modulo-additive S-D Z-IC, the capacity region is also achieved by making the following choices in the above region: $p(u|s=0)$ to be the uniform pmf over $\mc{X}$, $p(u|s=1)$ to be $\bm{p}$, $p(x_1|u,s)$ to be $\bm{\delta}_u$ (i.e. $X_1=U$) and $p(x_2)$ to be the uniform pmf over $\mc{X}$. Thus, the capacity region of the modulo-additive S-D Z-IC can also be achieved by treating interference as noise at the second receiver.
\end{remark}

\subsection{Multiple-level modulo-additive S-D Z-IC}\label{subsec:multiplelevel}

The linear deterministic model introduced in \cite{Ave11} consists of multiple \emph{bit levels} that roughly correspond to bits communicated at different power levels. The modulo-additive S-D Z-IC that we looked at in the previous subsection is a special case in which the number of levels is one. Extending the model to have multiple bit levels raises some interesting questions which we consider in this subsection. 

More specifically, consider the model depicted in Fig.~\ref{fig:model_modulo_multiple}, which can be thought of as three copies of the model in Fig.~\ref{fig:model_modulo}, which are however related by the common state affecting them. For simplicity, we restrict attention to the case when the alphabet on each level, denoted by $\mc{X}$, is the binary alphabet, i.e. $\{0,1\}$, and the state is Ber$(0.5).$ Let $L$ denote the number of bit levels.

\begin{figure}[!th]
\centering
\includegraphics[scale=1.5]{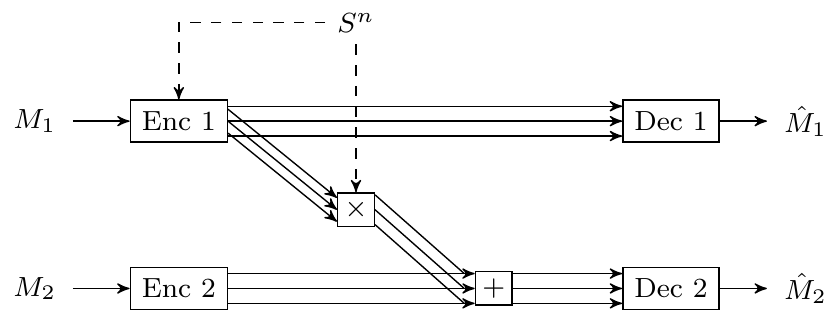}
\caption{The Modulo-Additive S-D Z-IC wit multiple bit levels.}
\label{fig:model_modulo_multiple}
\end{figure}

\begin{figure}[!th]
\centering
\begin{subfigure}
\centering
 \resizebox{.5\linewidth}{!}{\input{twolevel_binary_SD_ZIC.tikz}}
\caption{Comparison of the different rate regions for 2-level binary modulo-additive S-D Z-IC}
\label{fig:modulo_multiple_2}
\end{subfigure}
\vspace{4mm}
\begin{subfigure}
\centering
\resizebox{.5\linewidth}{!}{\input{threelevel_binary_SD_ZIC.tikz}}
\caption{Comparison of the different rate regions for 3-level binary modulo-additive S-D Z-IC}
\label{fig:modulo_multiple_3}
\end{subfigure}
\end{figure}

This model also falls under the injective-deterministic setup for which we have completely characterized the capacity region. So the capacity region can be easily computed, as we indeed do in the following. This evaluation also allows us to immediately compare the capacity region with the rates achieved by some straightforward achievability schemes that we can employ. In particular, consider the following two simple achievability schemes:
\begin{itemize}
\item ``Separation'': The simplest strategy one can employ is to separately consider each level and communicate over it independently of the other levels. This gives us that the rate pairs $(R_1,R_2)$ satisfying
\begin{equation}\label{eq:ach_separation}
\begin{split}
R_1 & < \frac{L}{2} + \sum_{i=1}^{L}\frac{1}{2} H(\bm{p}_i),\\
R_2 & < L - \sum_{i=1}^{L}H\left( \bm{p}_i + \bm{\delta}_0\right),
\end{split}
\end{equation}
for some $\bm{p}_1,\bm{p}_2,\dots,\bm{p}_L\in\mc{P}_{\mc{X}}$ are achievable.
\item ``Communicate state'': Alternatively, by noticing that strictly better rates could have been achieved if decoder~2 also had access to the state information, we can reserve one level to communicate the state from encoder~1 to decoder~2. This is done by ensuring that encoder~1 transmits a 1 on this reserved level whenever the state is 1, and encoder~2 constantly transmits a 0 on this level. The nodes communicate on the remaining levels keeping in mind that now decoder~2 also has state information. Note that while no communication can happen between encoder~2 and decoder~2 on the reserved level, encoder~1 can still communicate with decoder~1 at rate 0.5 on this level by treating it as a channel with stuck bits (bit equals 1 whenever state equals 1). This strategy provides us the following achievable region:
\begin{equation}\label{eq:ach_state}
\begin{split}
R_1 & < \frac{L}{2} + \frac{1}{2} H(\bm{p}),\\
R_2 & < L-1 - \frac{1}{2}H\left( \bm{p}\right),
\end{split}
\end{equation}
for some $\bm{p}\in\mc{P}_{\mc{X}^{L-1}}$.
\end{itemize}

We can expect that the suboptimality of reserving one level for communicating the state should become relatively small as the number of levels increases i.e. at high SNR. This is corroborated by the numerical analysis, shown in Figs.~\ref{fig:modulo_multiple_2} and \ref{fig:modulo_multiple_3}, in which we can see that there is a marked improvement in the rates achieved by this scheme relative to the capacity region as we increase the number of levels from 2 to 3. Indeed, since all the levels are affected by the same state, the entropy of the state becomes small compared to the communication rates as the SNR increases, so it is not a big overhead to explicitly communicate the state to decoder~2 at high SNR. However, at low SNR, the figures show that the overhead incurred is quite high due to which this approach is significantly suboptimal, while the simple scheme of treating the levels separately results in achieving very close to the entire capacity region.

\section{Interference Channels with Partial Cribbing}\label{sec:cribbing}

In this section, we focus on deterministic Z-interference channels when the interfering transmitter can overhear the signal transmitted by the other transmitter after it passes through some channel. This channel is also modeled as a deterministic channel, dubbed as \emph{partial cribbing} in \cite{Asn13}. Deterministic models, in particular linear deterministic models \cite{Ave11}, have gained popularity due to the observation that they are simpler to analyze and are provably close in performance to Gaussian models. 

There have been quite a few very sophisticated achievability schemes designed for interference channels with causal cribbing encoders, however optimality of the achievable rate regions has not been addressed. In the most general interference channel model with causal cribbing \cite{Yan11}, each encoder needs to split its message into four parts: a common part to be sent cooperatively, a common part to be sent non-cooperatively, a private part to be sent cooperatively and a private part to be sent non-cooperatively. Further, because of the causal nature of cribbing, achievability schemes usually involve block-Markov coding, so that each encoder also needs to consider the cooperative messages of both encoders from the previous block. Motivated by the alternative achievability scheme we have presented earlier for the general interference channel, we present a simple optimal achievability scheme that minimizes the rate-splitting that is required. Specifically, while encoder~2 only splits its message into a cooperative and non-cooperative private part, encoder~1 does not perform any rate-splitting at all. By focusing on the specific configuration of the Z-interference channel, we are able to prove the optimality of an achievability scheme that is simpler than the highly involved achievability schemes for the general case that are currently known. 

\subsection{Result for Partial Cribbing}\label{subsec:main_res_crib}
\begin{thm}\label{thm:part_crib}
The capacity region of the injective deterministic Z-interference channel with unidirectional partial cribbing, depicted in Fig.~\ref{fig:model_crib}, is given by the convex closure of $(R_1,R_2)$ satisfying \begin{equation}\label{eq:cap_part_crib}
\begin{split}
R_1 & \leq H(Y_1|W),\\
R_2 & \leq \min\Big(H(Y_2), H(Y_2,Z_2|T_1,W)\Big),\\
R_1 + R_2 & \leq H(Y_1|T_1,W)+\min\Big(H(Y_2), H(Y_2,Z_2|W)\Big),
\end{split}
\end{equation}
for $p(w)p(x_1|w)p(x_2|w),$ where $W$ is an auxiliary random variable whose cardinality can be bounded as
$|\mathcal{W}|\leq |\mathcal{Y}_2|+3.$
\end{thm}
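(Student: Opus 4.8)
\emph{Achievability.} The plan is to run the multicoding/superposition codebook construction of Section~\ref{subsec:achHK} inside a block-Markov scheme over $B$ blocks. Encoder~2 splits $M_2=(M_{2a},M_{2b})$ into a cooperative part of rate $R_{2a}$ and a private non-cooperative part of rate $R_{2b}$, with $R_2=R_{2a}+R_{2b}$, while encoder~1 does not split its message. In each block one generates $2^{nR_{2a}}$ coordination codewords $w^n(l)\sim\prod_i p(w_i)$ and, conditionally on each $w^n(l)$, a codebook $x_1^n(l,m_1)\sim\prod_i p(x_{1i}|w_i(l))$ for encoder~1 and a superposition codebook for encoder~2 in which the cooperative index selects a sub-cloud tied (through the deterministic map $z_2$) to a typical $Z_2$-sequence given $w^n(l)$ and $m_{2b}$ selects the satellite, all drawn according to $p(x_2|w)$. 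In block~$j$ both encoders use $l=m_{2a}^{(j-1)}$ as the cloud index: encoder~2 knows it because it transmitted it, and encoder~1 recovers it from the strictly-causal cribbing signal $Z_2^n[j-1]$ of the previous block. Receiver~1 decodes $M_1$ from $Y_1^n$ across blocks, and receiver~2 decodes backwards, using the fact that encoder~1 relays a description of the block-$j$ overheard signal $Z_2^n[j]$ through its interference in block~$j+1$, which receiver~2 extracts from $Y_2^n[j+1]$ via the injectivity assumption $t_1=g(y_2,x_2)$; thus the effective observation available to receiver~2 about the block-$j$ symbols of $X_2$ is essentially $(Y_2^n[j],Z_2^n[j])$, conditioned on $W$ and on the interference $T_1$, which is the source of the term $H(Y_2,Z_2|T_1,W)$. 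Writing out the covering- and packing-lemma conditions for the cribbing-recovery step at encoder~1 and for both decoders, applying Fourier--Motzkin elimination to remove $R_{2a}$ and $R_{2b}$, and using the Markov relations $X_1-W-X_2$, $Z_2-X_2$, $T_1-X_1$ (as in Section~\ref{subsec:achHK}) to discard redundant inequalities yields \eqref{eq:cap_part_crib}; letting $B\to\infty$ removes the block-Markov overhead.

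\emph{Converse.} Fix a reliable code; Fano gives $H(M_1|Y_1^n)\le n\epsilon_n$ and $H(M_2|Y_2^n)\le n\epsilon_n$. In the spirit of the converse of Theorem~\ref{thm:cap} I single-letterize with $W_i=Z_2^{i-1}$, together with a uniform time-sharing index $Q$ absorbed into $W$ at the end. This $W_i$ is the information common to both encoders at time~$i$: encoder~2 knows $Z_2^{i-1}$ since it is a function of its past inputs, encoder~1 knows it by the cribbing model, and because $M_1\perp(M_2,Z_2^{i-1})$ one gets $X_{1i}-W_i-X_{2i}$, so the induced law factors as $p(w)p(x_1|w)p(x_2|w)$; moreover $Z_{2i}$ is a function of $X_{2i}$, $T_{1i}$ a function of $X_{1i}$, and $Z_{2i}\perp T_{1i}\mid W_i$. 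Since $M_1\perp Z_2^n$,
\begin{equation*}
nR_1=H(M_1|Z_2^n)\le I(M_1;Y_1^n|Z_2^n)+n\epsilon_n\le\sum_{i=1}^n H(Y_{1i}|Z_2^{i-1})+n\epsilon_n,
\end{equation*}
which single-letterizes to $R_1\le H(Y_1|W)$; and $R_2\le H(Y_2)$ follows from $nR_2\le I(M_2;Y_2^n)+n\epsilon_n\le\sum_i H(Y_{2i})+n\epsilon_n$ and concavity of entropy. For $R_2\le H(Y_2,Z_2|T_1,W)$ I give receiver~2 the genies $(T_1^n,Z_2^n)$ and use: (i) $I(M_2;Z_2^n)=H(Z_2^n)$, since $Z_2^n$ is a function of $M_2$; (ii) $H(Y_2^n,T_1^n|M_2)=H(T_1^n|M_2)$, by injectivity; (iii) $H(T_1^n|Z_2^n)=H(T_1^n|M_2)$, since $T_1^n$ is a function of $(M_1,Z_2^n)$ and $M_1\perp(M_2,Z_2^n)$ force $I(T_1^n;M_2|Z_2^n)=0$. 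These reduce the bound to $nR_2\le H(Z_2^n)+H(Y_2^n|T_1^n,Z_2^n)+n\epsilon_n$, which single-letterizes (using $Z_{2i}\perp T_{1i}\mid W_i$) to $R_2\le H(Y_2,Z_2|T_1,W)$. The two sum-rate bounds come from adding the estimate $H(M_1)=H(M_1|Z_2^n)\le H(T_1^n|Z_2^n)+H(Y_1^n|T_1^n,Z_2^n)+n\epsilon_n$ (genie $T_1^n$ at receiver~1) to, respectively, $H(M_2)\le H(Y_2^n)-H(T_1^n|M_2)+n\epsilon_n$ and $H(M_2)\le H(Z_2^n)+H(Y_2^n|Z_2^n)-H(T_1^n|M_2)+n\epsilon_n$ (genie $Z_2^n$ at receiver~2); by fact~(iii) the $H(T_1^n|Z_2^n)$ term cancels the $-H(T_1^n|M_2)$, leaving $R_1+R_2\le H(Y_1|T_1,W)+\min\{H(Y_2),H(Y_2,Z_2|W)\}$ after single-letterization.

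\emph{Cardinality and the main obstacle.} The bound $|\mathcal{W}|\le|\mathcal{Y}_2|+3$ follows from the Fenchel--Eggleston--Carathéodory argument: it suffices for $W$ to preserve the $|\mathcal{Y}_2|-1$ parameters of the marginal $p_{Y_2}$ together with the three functionals $H(Y_1|T_1,W=w)$, $H(Y_2,Z_2|T_1,W=w)$ and $H(Y_2,Z_2|W=w)$ that appear in \eqref{eq:cap_part_crib}. I expect the main difficulty to be on the achievability side: designing the block-Markov/multicoding scheme---specifically, how much of the overheard signal encoder~1 relays and how receiver~2 fuses its direct and relayed observations across two consecutive blocks---so that Fourier--Motzkin elimination collapses the per-step inequalities to \emph{exactly} the three constraints of \eqref{eq:cap_part_crib}, including producing the ``$\min$'' structure (rather than a union of two regions), and checking, via the Markov relations, that the inequalities coming from the cribbing-recovery step at encoder~1 and from joint decoding at receiver~1 are redundant.
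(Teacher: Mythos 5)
Your converse is correct and essentially matches the paper's: the identification $W_i=Z_2^{i-1}$, the genie arguments, and the cancellation of the $H(T_1^n|\cdot)$ terms across the $R_1$- and $R_2$-bounds are the same ideas. (Your fact (iii), $H(T_1^n\mid Z_2^n)=H(T_1^n\mid M_2)$, is an exact-equality route to what the paper accomplishes via $H(T_1^n\mid Z_2^n)\le H(T_1^n)=H(Y_2^n\mid X_2^n)=H(Y_2^n\mid X_2^n,Z_2^n)$ plus a later cancellation---both are fine.) The cardinality count is slightly off: you list only three per-$w$ functionals but forget $H(Y_1\mid W=w)$, so as written you would prove $|\mathcal W|\le|\mathcal Y_2|+2$; including it gives the stated $|\mathcal Y_2|+3$.

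The real gap is in the achievability, and it is exactly the point you flag as the ``main difficulty.'' Your codebook for encoder~1, $x_1^n(l,m_1)\sim\prod_i p(x_{1i}\mid w_i(l))$ drawn conditionally on the cloud center $w^n(l)$, is superposition coding, not multicoding, and it does not work here: receiver~1 never observes $W$ and does not know the cooperative index $l=m_{2a}^{(j-1)}$, so to decode $m_1$ it would have to jointly resolve $l$, which introduces unwanted constraints coupling $R_1$ with $R_{2a}$ and would not Fourier--Motzkin down to \eqref{eq:cap_part_crib}. The paper's construction avoids this precisely through multicoding: the sequences $u_d^n(m_{1,b},l_d)$ (with $U_d=Y_1$) are generated \emph{independently} of $w^n$ from the marginal $p(u_d)$, encoder~1 then searches within the bin $m_{1,b}$ for an $l_d$ (and $l_c$ for $u_c^n$, with $U_c=T_1$) making $(w^n,u_c^n,u_d^n)$ jointly typical, and decoder~1 decodes $m_{1,b}$ by testing $(u_d^n(m_{1,b},l_d),y_1^n)$ alone---no reference to $w^n$ or to $l$. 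This is the same covering-then-packing mechanism as in Section~\ref{subsec:achHK}, and it is what makes the constraint on $R_1$ come out as $R_1+R_d<I(U_d;Y_1)$ with $R_d>I(U_d;W)$ and $R_d+R_c>I(U_d;U_c,W)$, rather than a joint-decoding constraint involving $R_{2a}$. After FM elimination this yields \eqref{eq:cap_part_crib} plus an extra bound $R_1<H(Y_1|W,T_1)+H(Y_2|W,Z_2)$, which the paper then shows is dominated by $R_1<H(Y_1|W)$. So the remaining work is not just ``carry out the FM elimination'' but first replace the conditional codebook at encoder~1 by the independent multicodebook and typicality-based selection; with that in place, the rest of your plan (block-Markov with cloud index carried over from the cribbed $m'_{2,b-1}$, backward decoding at receiver~2, FM elimination) proceeds as in the paper.
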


The proof of this theorem is presented below.

\subsection{Proof of Theorem~\ref{thm:part_crib}}\label{subsec:proof_crib}
\emph{Achievability}\\
Choose a pmf $p(w)p(u_d,u_c,x_1|w)p(x_2,z_2|w)$ and ${0<\epsilon'<\epsilon}$, where for the sake of generality, we use the auxiliary random variables $U_d$ and $U_c$. In the injective deterministic case at hand, they can be set to $Y_1$ and $T_1$ respectively.
\subsubsection*{Codebook Generation}
The communication time is divided into $B$ blocks, each containing $n$ channel uses, and an independent random code is generated for each block $b\in[1:B]$. Whenever it is clear from the context, we suppress the dependence of codewords on $b$ to keep the notation simple. The messages in block $B$ are fixed apriori, so a total of $B-1$ messages are communicated over the $B$ blocks. The resulting rate loss can be made as negligible as desired by choosing a sufficiently large $B$.

We split $R_2$ as $R_2'+R_2''$, which corresponds to the split of message~2 into two parts, one that will be sent cooperatively by both transmitters to receiver~2 and the other non-cooperatively only by transmitter~2 to receiver~2. For each block $b$, let $m_{2,b}'\in[1:2^{nR_2'}]$ and $m_{2,b}''\in[1:2^{nR_2''}]$. For each block $b\in [1:B]$, we generate $2^{nR_2'}$ sequences $w^n$ i.i.d. according to $p(w)$.
\begin{itemize}
\item For each $w^n$ in block $b$, we generate $2^{nR_{2}'}$ sequences $\left\{z_2^n(w^n,m'_{2,b})\right\}$ i.i.d. according to $p(z_2|w)$. Then for each $(w^n,z_2^n)$, we generate $2^{nR_2''}$ sequences $\left\{x_2^n(w^n,z_2^n,m''_{2,b})\right\}$ i.i.d. according to $p(x_2|z_2,w)$.
\item For each $w^n$ in block $b$, we generate $2^{nR_c}$ sequences $\left\{u_c^n(w^n,l_c)\right\}$ i.i.d. according to $p(u_c|w)$, where $l_c\in[1:2^{nR_c}]$. We also generate $2^{n(R_1+R_d)}$ sequences $\left\{u_d^n(m_{1,b},l_d)\right\}$ i.i.d. according to $p(u_{d})$, where $m_{1,b}\in[1:2^{nR_1}]$ and $l_d\in[1:2^{nR_d}]$. \footnote{Note that the $u_d^n$ sequences are generated independently of the $w^n$ sequences.}
\end{itemize}

\subsubsection*{Encoding}
Let us assume for now that as a result of the cribbing, encoder 1 knows $m'_{2,b-1}$ at the end of block ${b-1}$. Then in block $b$, both encoders can encode $m'_{2,b-1}$ using $w^n(m'_{2,b-1})$ where $w^n$ is from the code for block $b$.
\begin{itemize}
\item To transmit message $m_{1,b}$, encoder 1 finds a pair $(l_{d},l_{c})$ such that $$(w^n(m'_{2,b-1}),u_c^n(w^n,l_c),u_d^n(m_{1,b},l_{d}))\in\mc{T}^{(n)}_{\epsilon'}.$$ It transmits $x_1^n$ that is generated i.i.d. according to $p(x_1|w,u_d,u_c)$.
\item To transmit message $m_{2,b}=(m'_{2,b},m''_{2,b})$, encoder 2 encodes $m'_{2,b}$ as $z_2^n(w^n,m'_{2,b})$ and then transmits $x_2^n(w^n,z_2^n,m''_{2,b})$.
\end{itemize}
We fix apriori the messages in block $B$ to be $m_{1,B}=1$, $m'_{2,B}=1$ and $m''_{2,B}=1$. Also, to avoid mentioning edge cases explicitly, whenever $m_{1,0}$, $m'_{2,0}$ or $m''_{2,0}$ appear, we assume that all are fixed to 1.

\subsubsection*{Decoding}
\begin{itemize}
\item \emph{Encoder~1:} At the end of block $b$, assuming it has already decoded $m'_{2,b-1}$ at the end of block $b-1$, encoder 1 decodes $m'_{2,b}$ by finding the unique $\hat{m}'_{2,b}$ such that the sequence $z_2^n$ it has observed via cribbing is equal to $z_2^n(w^n,\hat{m}'_{2,b})$.
\item \emph{Decoder~1:} In each block $b$, decoder 1 finds the unique $\hat{m}_{1,b}$ such that $(u_d^n(m_{1,b},l_{d}),y_1^n)\in\mc{T}^{(n)}_{\epsilon}$ for some $l_{d}$.
\item \emph{Decoder~2:} Decoder 2 performs backward decoding as follows: 
\begin{itemize}
\item In block $B$, decoder 2 finds a unique $m'_{2,B-1}$ such that the condition \eqref{eq:jtd_dec2_B} is satisfied for some $l_c.$
\begin{equation}\label{eq:jtd_dec2_B}
(w^n(\hat{m}'_{2,B-1}),z_2^n(w^n,1),x_2^n(w^n,z_2^n,1),u_c^n(w^n,l_c),y_2^n)\in\mc{T}^{(n)}_{\epsilon}
\end{equation}
\item In block $b$, assuming $m'_{2,b}$ has been decoded correctly, it finds the unique $(\hat{m}'_{2,b-1}, \hat{m}''_{2,b})$ such that  the condition \eqref{eq:jtd_dec2} is satisfied for some $l_{c}$.
\begin{equation}\label{eq:jtd_dec2}
(w^n(\hat{m}'_{2,b-1}),z_2^n(w^n,m'_{2,b}),x_2^n(w^n,z_2^n,\hat{m}''_{2,b}),u_c^n(w^n,l_c),y_2^n)\in\mc{T}^{(n)}_{\epsilon}
\end{equation}
\end{itemize}
\end{itemize}

\subsubsection*{Probability of Error}

To get a vanishing probability of error, we can impose the conditions described in the following list.
\begin{itemize}
\item 
Similar to the proof of the mutual covering lemma \cite[Ch. 8]{Gam12}, we can show that the following conditions are sufficient for the success of encoding at the first transmitter:
\begin{IEEEeqnarray}{rCl}
R_d & > & I(U_d;W) +\delta(\epsilon'),\label{eq:dec1}\\
R_d + R_c & > & I(U_d;U_c,W)+\delta(\epsilon')\label{eq:dec2}.
\end{IEEEeqnarray}
\item For the decoding at encoder 1 to succeed:
\begin{equation}
R'_2 < H(Z_2|W)-\delta(\epsilon).\label{eq:dec3}
\end{equation}
\item For decoding at decoder 1 to succeed:
\begin{equation}
R_1 + R_d < I(U_d;Y_1)-\delta(\epsilon).\label{eq:dec4}
\end{equation}
\item For the backward decoding at decoder 2 to succeed, it is sufficient that the following conditions are satisfied:
\begin{IEEEeqnarray}{rCl}
R''_2 & < & I(X_2;Y_2|W,U_c,Z_2)-\delta(\epsilon),\label{eq:dec5}\\
R''_2+ R_c & < & I(U_c,X_2;Y_2|W,Z_2)-\delta(\epsilon),\label{eq:dec6}\\
R'_2 + R''_2 + R_c & < &  I(W,U_c,X_2;Y_2)-\delta(\epsilon).\label{eq:dec7}
\end{IEEEeqnarray}
\end{itemize}

Noting that $R'_2+R''_2=R_2$, eliminating $(R_d,R_c,R'_2,R''_2)$ from \eqref{eq:dec1}-\eqref{eq:dec7} via Fourier-Motzkin elimination, and substituting $U_d=Y_1$ and $U_c=T_1$, we get the achievable region in \eqref{eq:cap_part_crib} with the following additional bound on $R_1$:
$$R_1< H(Y_1|W,T_1) + H(Y_2|W,Z_2).$$
To conclude the proof of achievability, we show that this bound is rendered redundant by $R_1<H(Y_1|W)$ which can be proved by the following chain of inequalities:
\begin{IEEEeqnarray*}{rCl}
H(Y_1|W,T_1) + H(Y_2|W,Z_2) & \geq & H(Y_1|W,T_1) + H(Y_2|W,X_2)\\
& = & H(Y_1|W,T_1) + H(T_1|W,X_2)\\
& = & H(Y_1|W,T_1) + H(T_1|W)\\
& = & H(Y_1,T_1|W)\\
& \geq & H(Y_1|W).
\end{IEEEeqnarray*}

\emph{Converse}\\
We now establish the converse. By Fano's inequality, we have the following two relations that are satisfied by any sequence of codes that achieve reliable communication:
$$H(M_1|Y_1^n)\leq n\epsilon_n,\quad H(M_2|Y_2^n)\leq n\epsilon_n,$$
where $\epsilon_n\rightarrow 0$ as $n\rightarrow\infty.$

First, an upper bound on $R_1$ is established in \eqref{eq:conv1}.
\begin{IEEEeqnarray}{rCl}
nR_1 & = & H(M_1)\nonumber\\
& = & H(M_1|Z_2^n)\nonumber\\
& \leq & I(M_1;Y_1^n|Z_2^n) + n\epsilon_n\nonumber\\
& \leq & H(Y_1^n|Z_2^n) + n\epsilon_n\nonumber\\
& \leq & \sum_{i=1}^n H(Y_{1i}|Z_2^{i-1}) + n\epsilon_n\nonumber\\
& = & \sum_{i=1}^n H(Y_{1i}|W_i) + n\epsilon_n,\IEEEeqnarraynumspace\label{eq:conv1}
\end{IEEEeqnarray}
where $W_i\triangleq Z_2^{i-1}$.


Next, we establish two bounds on $R_2$, the first one in \eqref{eq:conv2} as follows:
\begin{IEEEeqnarray}{rCl}
nR_2 & = & H(M_2)\nonumber\\
& \leq & I(M_2;Y_2^n) + n\epsilon_n\nonumber\\
& \leq & H(Y_2^n) + n\epsilon_n\nonumber\\
& \leq & \sum_{i=1}^n H(Y_{2i}) + n\epsilon_n,\IEEEeqnarraynumspace\label{eq:conv2}
\end{IEEEeqnarray}
and the second one in \eqref{eq:conv3} below:
\begin{IEEEeqnarray}{rCl}
nR_2 & = & H(M_2|M_1)\nonumber\\
& = & H(M_2,Z_2^n|M_1)\nonumber\\
& = & H(Z_2^n|M_1) + H(M_2|M_1,Z_2^n)\nonumber\\
& \stackrel{(a)}{=} & H(Z_2^n|M_1) + H(M_2|M_1,Z_2^n,T_1^n)\nonumber\\
& \leq & H(Z_2^n|M_1) + I(M_2;Y_2^n|M_1,Z_2^n,T_1^n) + n\epsilon_n\nonumber\\
& \leq & H(Z_2^n) + H(Y_2^n|Z_2^n,T_1^n) + n\epsilon_n\nonumber\\
& \leq & \sum_{i=1}^n H(Z_{2i}|W_{i}) + \sum_{i=1}^n H(Y_{2i}|W_i,T_{1i},Z_{2i}) + n\epsilon_n,\nonumber\\
& \stackrel{(b)}{=} & \sum_{i=1}^n H(Z_{2i}|W_{i},T_{1i}) + \sum_{i=1}^n H(Y_{2i}|W_i,T_{1i},Z_{2i}) + n\epsilon_n,\nonumber\\
& = & \sum_{i=1}^n H(Y_{2i},Z_{2i}|W_{i},T_{1i}) + n\epsilon_n,\IEEEeqnarraynumspace\label{eq:conv3}
\end{IEEEeqnarray}
where step $(a)$ follows because $T_1^n$ is a function of $X_1^n$ which is a function of $(M_1,Z_2^n)$, and step $(b)$ follows because $Z_{2i}-W_{i}-T_{1i}$.

Finally, we establish two bounds on the sum-rate, the first one in \eqref{eq:conv4} below:
\begin{IEEEeqnarray}{rRl}
& n(R_1+R_2) \nonumber\\
& = & H(M_1|Z_1^n) + H(M_2,Z_2^n)\nonumber\\
& = & H(M_1,T_1^n|Z_2^n) + H(Z_2^n) + H(M_2|Z_2^n)\nonumber\\
& = & H(T_1^n|Z_2^n) + H(M_1|T_1^n,Z_2^n)+ H(Z_2^n)  + H(M_2|Z_2^n)\nonumber\\
& \stackrel{(a)}{\leq} & H(Y_2^n|X_2^n,Z_2^n) + I(M_1;Y_1^n|T_1^n,Z_2^n) + H(Z_2^n) + I(M_2;Y_2^n|Z_2^n) + n\epsilon_n\nonumber\\
& \leq & H(Y_2^n|X_2^n,Z_2^n) + H(Y_1^n|T_1^n,Z_2^n) + H(Z_2^n)+ H(Y_2^n|Z_2^n) - H(Y_2^n|M_2,Z_2^n)+ n\epsilon_n\nonumber\\
& = & H(Y_2^n|X_2^n,Z_2^n) + H(Y_1^n|T_1^n,Z_2^n) + H(Z_2^n)+ H(Y_2^n|Z_2^n) - H(Y_2^n|M_2,X_2^n,Z_2^n)+ n\epsilon_n\nonumber\\
& \stackrel{(b)}{=} & H(Y_2^n|X_2^n,Z_2^n) + H(Y_1^n|T_1^n,Z_2^n) + H(Z_2^n)+ H(Y_2^n|Z_2^n) - H(Y_2^n|X_2^n,Z_2^n)+ n\epsilon_n\nonumber\\
& = & H(Y_1^n|T_1^n,Z_2^n) + H(Z_2^n)+ H(Y_2^n|Z_2^n) + n\epsilon_n\nonumber\\
& = & H(Y_1^n|T_1^n,Z_2^n) + H(Y_2^n,Z_2^n)+ n\epsilon_n\nonumber\\
& \leq & \sum_{i=1}^n H(Y_{1i}|T_{1i},W_i) + \sum_{i=1}^n H(Y_{2i},Z_{2i}|W_i) + n\epsilon_n,\nonumber\\*\label{eq:conv4}
\end{IEEEeqnarray}
where step $(a)$ uses the fact that $H(T_1^n|Z_2^n)\leq H(Y_2^n|X_2^n,Z_2^n)$, which is proved below: \begin{IEEEeqnarray*}{rCl}
H(T_1^n|Z_2^n) & \leq & H(T_1^n)\\
& = & H(Y_2^n|X_2^n)\\
& = & H(Y_2^n|X_2^n,Z_2^n),
\end{IEEEeqnarray*}
and step $(b)$ follows because $M_2 - (X_2^n,Z_2^n) - Y_2^n$.

The second bound on the sum-rate is established in \eqref{eq:conv5} as follows:
\begin{IEEEeqnarray}{rRl}
& n(R_1+R_2) \nonumber\\
& = & H(M_1,T_1^n|Z_2^n) + H(M_2)\nonumber\\
& \leq & H(T_1^n|Z_2^n) + H(M_1|T_1^n,Z_2^n) + I(M_2;Y_2^n) + n\epsilon_n\nonumber\\
& \leq & H(Y_2^n|X_2^n,Z_2^n) + I(M_1;Y_1^n|T_1^n,Z_2^n) + H(Y_2^n)- H(Y_2^n|M_2) + n\epsilon_n\nonumber\\
& \leq & H(Y_2^n|X_2^n,Z_2^n) + H(Y_1^n|T_1^n,Z_2^n) + H(Y_2^n)- H(Y_2^n|X_2^n,Z_2^n) + n\epsilon_n\nonumber\\
& = &  H(Y_1^n|T_1^n,Z_2^n) + H(Y_2^n) + n\epsilon_n\nonumber\\
& \leq & \sum_{i=1}^n H(Y_{1i}|T_{1i},W_i) + \sum_{i=1}^n H(Y_{2i}) + n\epsilon_n.\IEEEeqnarraynumspace\label{eq:conv5}
\end{IEEEeqnarray}

In \eqref{eq:conv1}-\eqref{eq:conv5}, we can introduce a time-sharing random variable $Q$ uniformly distributed on $[1:n]$. Defining $W$ to be $(W_Q,Q)$ and $(X_1,X_2,Y_1,Y_2)$ to be $(X_{1Q},X_{2Q},Y_{1Q},Y_{2Q})$, we get the required bounds on the rates.

We also require the Markov relationship $X_1-W-X_2$ to be satisfied. Since $W_i$ is chosen to be $Z_2^{i-1}$, we immediately have $X_{1i} - W_i - X_{2i}$ and hence $X_1-W-X_2$. Finally, the bound on the cardinality of $W$ can be established using the standard convex cover method.  \hfill\IEEEQED

\section{Extensions}\label{sec:extensions}

\subsection{State-Dependent Z-channel}

The result in Theorem~\ref{thm:cap} can be extended easily to Z-channels in which transmitter~1 also wishes to communicate a message $M_{21}$ at rate $R_{21}$ to receiver~2, as shown in Fig.~\ref{fig:model_state_det_zc}. 

\begin{figure}[!h]
\centering
\includegraphics[scale=1.5]{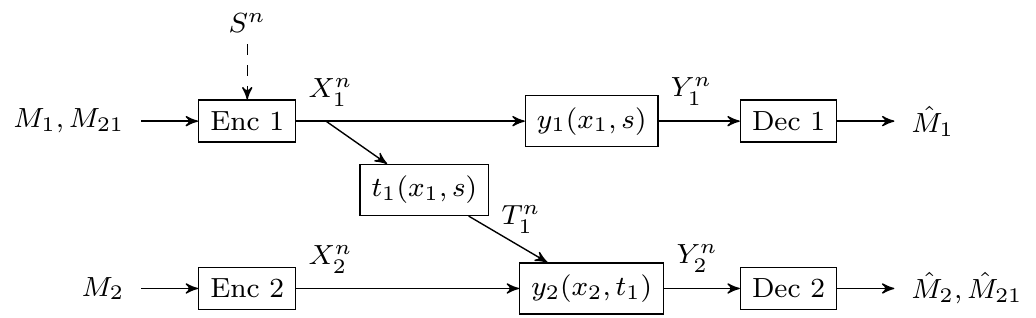}
\caption{The Injective Deterministic S-D Z-C}
\label{fig:model_state_det_zc}
\end{figure}  

\begin{thm}\label{thm:cap_z} The capacity region of the injective deterministic state-dependent Z-channel is the set of rate pairs $(R_{1},R_{21},R_{2})$ that satisfy
\begin{equation*}\label{eq:cap_z}
\begin{split}
R_{1} & \leq H(Y_1|S,Q),\\
R_{2} & \leq  H(Y_2|T_1,Q),\\
R_{21} & \leq H(T_1|S,Q),\\
R_{1} + R_{21} & \leq H(T_1,Y_1|S,Q),\\
R_{2} + R_{21} & \leq  H(Y_2|Q) - I(T_1;S|Q),\\
R_{1} + R_{2} + R_{21} & \leq H(Y_1|T_1,S,Q)+H(Y_2|Q)-I(T_1;S|Q),
\end{split}
\end{equation*}
for some pmf $p(q)p(x_1|s,q)p(x_2|q),$ where $|\mc{Q}|\leq 6$.
\end{thm}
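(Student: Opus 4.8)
The plan is to obtain achievability by adapting the multicoding scheme of Theorem~\ref{thm:gen_ach} so that the auxiliary codebook of the interfering transmitter also carries the third message $M_{21}$, and to obtain the converse by refining the single-letterization in the proof of Theorem~\ref{thm:cap}.

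\textbf{Achievability.} Starting from the scheme in Subsection~\ref{subsec:proofthm1}, I would keep the $u^n(m_1,l_1)$ codebook of size $2^{n(R_1+R_1')}$ but replace the $v^n$ codebook by $2^{n(R_{21}+R_2')}$ codewords $v^n(m_{21},l_2)$ drawn i.i.d.\ from $p(v)$; after seeing $s^n$, encoder~1 picks $(l_1,l_2)$ with $(u^n(m_1,l_1),v^n(m_{21},l_2),s^n)\in\mc{T}_{\epsilon'}^{(n)}$ and transmits $x_1^n$ generated from $\prod p(x_1|u,v,s)$. Decoder~1 is unchanged; decoder~2 looks for the unique $(\hat m_2,\hat m_{21})$ with $(x_2^n(\hat m_2),v^n(\hat m_{21},l_2),y_2^n)\in\mc{T}_{\epsilon}^{(n)}$ for some $l_2$. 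The covering conditions for encoder~1 are exactly those of Theorem~\ref{thm:gen_ach} (the covering is over the $l_1,l_2$ freedom): $R_1'>I(U;S)$, $R_2'>I(V;S)$, $R_1'+R_2'>I(U;S)+I(U,S;V)$; the packing analysis at the two decoders, using $X_2\perp(U,V,X_1)$, gives $R_1+R_1'<I(U;Y_1)$, $R_2<I(X_2;Y_2|V)$, $R_{21}+R_2'<I(V;Y_2|X_2)$, and $R_2+R_{21}+R_2'<I(X_2,V;Y_2)$. Fourier--Motzkin elimination of $R_1',R_2'$ then yields
\begin{align*}
R_1 &< I(U;Y_1)-I(U;S),\\
R_2 &< I(X_2;Y_2|V),\\
R_{21} &< I(V;Y_2|X_2)-I(V;S),\\
R_1+R_{21} &< I(U;Y_1)+I(V;Y_2|X_2)-I(U;S)-I(U,S;V),\\
R_2+R_{21} &< I(X_2,V;Y_2)-I(V;S),\\
R_1+R_2+R_{21} &< I(U;Y_1)+I(X_2,V;Y_2)-I(U;S)-I(U,S;V).
\end{align*}
Choosing $U=Y_1$ and $V=T_1$ (legitimate since encoder~1 knows $(M_1,M_{21},S^n)$, hence $T_1^n$ and $Y_1^n$) and using $H(Y_2|X_2,T_1)=0$, $T_1\perp X_2$, and the bijection $Y_2\leftrightarrow T_1$ given $X_2$ collapses these to the six expressions of the theorem; adding a time-sharing variable $Q$ finishes this direction.

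\textbf{Converse.} The bounds $nR_1\le\sum_i H(Y_{1i}|S_i)+n\epsilon_n$ and $nR_2\le\sum_i H(Y_{2i}|T_{1i})+n\epsilon_n$ carry over verbatim from Theorem~\ref{thm:cap}, using $M_1\perp S^n$ and $M_2\perp T_1^n$. Applying Fano jointly at receiver~2 gives $H(M_2,M_{21}|Y_2^n)\le n\epsilon_n$. For $R_{21}$ I would write $nR_{21}=H(M_{21}|M_2,S^n)\le I(M_{21};Y_2^n|M_2,S^n)+n\epsilon_n\le H(Y_2^n|M_2,S^n)$; since $X_2^n$ is a function of $M_2$ and $Y_2^n$ and $T_1^n$ are in bijection given $X_2^n$, this equals $H(T_1^n|X_2^n,M_2,S^n)=H(T_1^n|S^n)\le\sum_i H(T_{1i}|S_i)$, using $(T_1^n,S^n)\perp(X_2^n,M_2)$. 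For $R_1+R_{21}$, $n(R_1+R_{21})=H(M_1,M_{21}|S^n)\le I(M_1,M_{21};Y_1^n,Y_2^n|S^n)+2n\epsilon_n$; introducing $X_2^n$, which is independent of $(M_1,M_{21},S^n)$, to replace $Y_2^n$ by $T_1^n$, this is at most $H(Y_1^n,T_1^n|X_2^n,S^n)=H(Y_1^n,T_1^n|S^n)\le\sum_i H(Y_{1i},T_{1i}|S_i)$. The last two bounds repeat the $R_2$- and $(R_1+R_2)$-arguments of Theorem~\ref{thm:cap} with $M_2$ replaced by $(M_2,M_{21})$: the identity $H(Y_2^n|M_2,M_{21})=H(T_1^n|M_{21})$ (injectivity together with $(T_1^n,S^n)\perp(X_2^n,M_2)$) and $H(T_1^n|M_{21})\ge I(S^n;T_1^n|M_{21})\ge\sum_i I(S_i;T_{1i})$ generate the $-I(T_1;S|Q)$ term, giving $n(R_2+R_{21})\le\sum_i H(Y_{2i})-\sum_i I(T_{1i};S_i)+n\epsilon_n$ and $n(R_1+R_2+R_{21})\le\sum_i H(Y_{1i}|T_{1i},S_i)+\sum_i H(Y_{2i})-\sum_i I(T_{1i};S_i)+2n\epsilon_n$. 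A uniform time-sharing variable $Q\in[1:n]$ single-letterizes everything; $S\perp Q$ and $X_2\perp(X_1,S)\mid Q$ (because $X_2^n$ is a function of $M_2$, independent of $(M_1,M_{21},S^n)$) give the required form $p(q)p(x_1|s,q)p(x_2|q)$, and $|\mc{Q}|\le 6$ follows from the standard convex-cover argument applied to the six active constraints.

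\textbf{Main obstacle.} I expect the $R_1+R_{21}$ bound to be the delicate one: it is the only constraint that mixes the two receivers, and it forces $M_{21}$ to be ``routed'' through $T_1^n$ (it is recoverable from $(T_1^n,X_2^n)$ but not from $Y_1^n$), which requires carefully introducing the nuisance variable $X_2^n$ and exploiting its independence from $(M_1,M_{21},S^n)$. The same independence bookkeeping is what makes the substitution $M_2\mapsto(M_2,M_{21})$ legitimate in the steps producing the $-I(T_1;S|Q)$ term. The achievability direction is comparatively routine; the only point to check is that attaching $M_{21}$ to the $V$-codebook changes only the receiver-2 packing conditions and leaves the covering conditions intact.
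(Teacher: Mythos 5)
Your achievability scheme and its analysis (binning the $v^n$ codebook by $M_{21}$, keeping the covering conditions unchanged, and adding the receiver-2 packing events for $\hat m_{21}$) matches what the paper describes, and I checked that the Fourier--Motzkin output and the substitution $U=Y_1,V=T_1$ do collapse to the stated region. The converse bounds on $R_1,R_2,R_{21},R_1+R_{21},R_2+R_{21}$ are also correct; your $R_1+R_{21}$ derivation via the nuisance variable $X_2^n$ is exactly the right idea.

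There is, however, a genuine gap in your derivation of the $R_1+R_2+R_{21}$ bound. You propose to ``repeat the $(R_1+R_2)$-argument of Theorem~\ref{thm:cap} with $M_2$ replaced by $(M_2,M_{21})$,'' which means starting from $H(M_1|S^n)+H(M_2,M_{21})$ and then taking $I(M_1;T_1^n,Y_1^n|S^n)\le H(T_1^n,Y_1^n|S^n)$, $H(Y_2^n|M_2,M_{21})=H(T_1^n|M_{21})$, and $H(T_1^n|M_{21})\ge \sum_i I(T_{1i};S_i)$. But in the Z-channel the step $I(M_1;T_1^n,Y_1^n|S^n)\le H(T_1^n,Y_1^n|S^n)$ is \emph{lossy}: the dropped term $H(T_1^n,Y_1^n|M_1,S^n)$ is no longer $0$ because $(T_1^n,Y_1^n)$ depends on $M_{21}$ as well. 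Carrying your steps through yields
\begin{align*}
n(R_1+R_2+R_{21}) &\le H(T_1^n,Y_1^n|S^n)+H(Y_2^n)-H(T_1^n|M_{21})+2n\epsilon_n\\
&= H(T_1^n|S^n)+H(Y_1^n|T_1^n,S^n)+H(Y_2^n)-H(T_1^n|M_{21})+2n\epsilon_n,
\end{align*}
and your inequality $H(T_1^n|M_{21})\ge\sum_i I(T_{1i};S_i)$ leaves the term $H(T_1^n|S^n)$ uncancelled; one would need $H(T_1^n|M_{21})\ge H(T_1^n|S^n)+\sum_i I(T_{1i};S_i)$, which is false in general since $H(T_1^n|M_{21},S^n)<H(T_1^n|S^n)$. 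In Theorem~\ref{thm:cap} the cancellation works only because $T_1^n\perp M_2$, so $H(T_1^n|M_2)=H(T_1^n)$; here $T_1^n$ is \emph{not} independent of $M_{21}$. The net result of the literal substitution is $\sum_i H(T_{1i},Y_{1i}|S_i)$ on the $Y_1$ side, which is strictly weaker than the required $\sum_i H(Y_{1i}|T_{1i},S_i)$.

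The fix is to keep $M_{21}$ on the receiver-1 side rather than the receiver-2 side in the sum-rate decomposition: write $n(R_1+R_2+R_{21})=H(M_1,M_{21}|S^n)+H(M_2)$. The first summand you already bound by $H(Y_1^n,T_1^n|S^n)+2n\epsilon_n$ in your $R_1+R_{21}$ argument (via introducing $X_2^n$), and the second is bounded, exactly as in Theorem~\ref{thm:cap}, by $H(Y_2^n)-H(T_1^n)+n\epsilon_n$ using $T_1^n\perp M_2$. Combining, $H(Y_1^n,T_1^n|S^n)+H(Y_2^n)-H(T_1^n)=H(Y_1^n|T_1^n,S^n)+H(Y_2^n)-I(T_1^n;S^n)$, which single-letterizes to the desired expression. (Equivalently, you can start from $H(M_1|M_{21},S^n)+H(M_2,M_{21})$ and carry the conditioning on $M_{21}$ through the $Y_1$-side manipulations.) Everything else in your proposal is sound.
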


The achievability scheme for this case is similar to the achievability scheme for the Z-IC described in Section~\ref{subsec:proof_thm_cap} except that now the $v^n$ sequences at encoder~1 are also binned, and this bin-index corresponds to the message $M_{21}$, that is to be communicated from transmitter~1 to receiver~2. The converse can be established by following similar steps as the converse for the Z-IC.

\begin{remark}
Since broadcast channel and multiple-access channel are special cases of the Z-channel, Theorem~\ref{thm:cap_z} also provides the capacity region for the deterministic broadcast channel and the injective deterministic multiple-access channel.
\end{remark}

\subsection{State-dependent injective deterministic Z-IC with unidirectional partial cribbing}

\begin{figure}[!ht]
\centering
\includegraphics[scale=1.5]{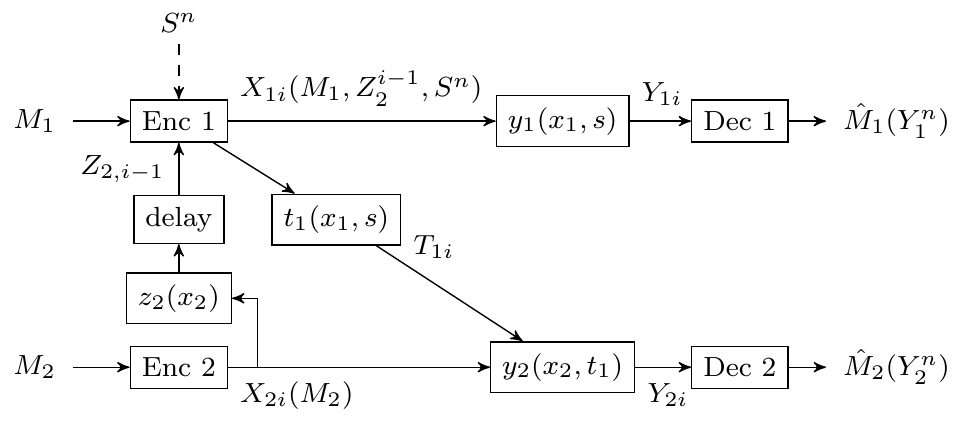}
\caption{State-Dependent Injective Deterministic Z-Interference Channel with Unidirectional Partial Cribbing}
\label{fig:model_state_crib}
\end{figure}

To illustrate further the advantages of the multicoding scheme, we consider a model that combines the state-dependent Z-IC and the Z-IC with unidirectional partial cribbing, as depicted in Fig.~\ref{fig:model_state_crib}. We can combine the achievability schemes for the two component setups from Sections~\ref{subsec:proof_thm_cap} and \ref{subsec:proof_crib} in a straightforward manner to get an achievability scheme for this setup. It turns out that this is capacity-achieving, resulting in the following theorem. 
\begin{thm} The capacity region of the channel in Fig.~\ref{fig:model_state_crib} is given by the convex closure of rate pairs $(R_1,R_2)$ satisfying
\begin{IEEEeqnarray*}{rCl}
R_1 & < & H(Y_1|W,S)\\
R_2 & < & H(Y_2,Z_2|T_1,W)\\
R_2 & < & \min(H(Y_2),H(Y_2,Z_2|W)) - I(T_1;S|W)\\
R_1 + R_2 & < & H(Y_1|W,T_1,S) + \min(H(Y_2),H(Y_2,Z_2|W)) - I(T_1;S|W)
\end{IEEEeqnarray*}
for pmf of the form $p(w)p(x_2|w)p(x_1|w,s)$.
\end{thm}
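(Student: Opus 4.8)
The plan is to obtain this theorem by amalgamating the proofs of Theorem~\ref{thm:cap} and Theorem~\ref{thm:part_crib}: indeed the stated region is exactly the region of Theorem~\ref{thm:cap} with the time‑sharing variable $Q$ replaced by the cribbing auxiliary $W$ and with every right‑hand‑side occurrence of an entropy $H(Y_2|\cdot)$ replaced by $\min\bigl(H(Y_2),H(Y_2,Z_2|\cdot)\bigr)$, which is the extra information the cribbed signal $Z_2$ makes available to receiver~2. So the whole task is to check that the two proof machineries compose without interference.

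For achievability I would run the block‑Markov construction from the proof of Theorem~\ref{thm:part_crib} in Section~\ref{subsec:proof_crib}: split $R_2=R_2'+R_2''$, carry the cooperative part $m'_{2,b-1}$ into a cloud‑center codeword $w^n(m'_{2,b-1})$ that both encoders reproduce (encoder~1 having learned $m'_{2,b-1}$ from the cribbed sequence $z_2^n$, exactly as in \eqref{eq:dec3}), and let receiver~2 decode backwards through the blocks. The single new ingredient is that, within a block, encoder~1 now also possesses the noncausal state $s^n$, so in place of ordinary superposition it performs the Gelfand--Pinsker multicoding of the proof of Theorem~\ref{thm:gen_ach}: from its $u_d^n(m_{1,b},\cdot)$ multicodebook and its $u_c^n(w^n,\cdot)$ codebook it selects $(l_d,l_c)$ so that $(w^n,u_c^n,u_d^n,s^n)\in\mc{T}^{(n)}_{\epsilon'}$, and then draws $x_1^n$ from $p(x_1\,|\,w,u_d,u_c,s)$. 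The error analysis is the union of the two: a multivariate covering bound of the type \eqref{eq:ach1_state}--\eqref{eq:ach3_state} (conditioned on $W$) for this joint‑typicality encoding, the bound \eqref{eq:dec3} for recovering $m'_2$ at encoder~1, the packing bound \eqref{eq:dec4} at decoder~1, and the backward‑decoding bounds \eqref{eq:dec5}--\eqref{eq:dec7} at decoder~2, now all read with $(U_d,U_c)$ correlated with $S$. Setting $U_d=Y_1$ and $U_c=T_1$ — legitimate since encoder~1 knows $(m_{1,b},s^n)$ and hence $t_1^n$ and $y_1^n$ — and eliminating $R_d,R_c,R_2',R_2''$ by Fourier--Motzkin yields the claimed region plus one spurious inequality of the form $R_1<H(Y_1|W,T_1,S)+H(Y_2,Z_2|W)$; I would discard it as at the end of the proof of Theorem~\ref{thm:part_crib}, via $H(Y_1|W,T_1,S)+H(Y_2,Z_2|W)\ge H(Y_1|W,T_1,S)+H(T_1|W)\ge H(Y_1|W,T_1,S)+H(T_1|W,S)=H(Y_1,T_1|W,S)\ge H(Y_1|W,S)$, where the first step uses $H(Y_2,Z_2|W)\ge H(Y_2|W,X_2)=H(T_1|W,X_2)=H(T_1|W)$ (injectivity of $y_2$ in $T_1$ given $X_2$).

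For the converse I would set $W_i\triangleq Z_2^{i-1}$ and splice the cribbing manipulations \eqref{eq:conv3}--\eqref{eq:conv5} with the state manipulation behind the second $R_2$ bound in the proof of Theorem~\ref{thm:cap}. The bound $R_1\le H(Y_1|W,S)$ drops out of $nR_1=H(M_1|Z_2^n,S^n)\le I(M_1;Y_1^n|Z_2^n,S^n)+n\epsilon_n$ and single‑letterization. The three remaining bounds all hinge on two facts: $H(Y_2^n|M_2)=H(T_1^n|Z_2^n)$ (because $X_2^n,Z_2^n$ are functions of $M_2$, $y_2$ is injective in $T_1$ given $X_2$, and $T_1^n$ depends on $M_2$ only through $Z_2^n$) and $H(T_1^n|Z_2^n)\ge I(T_1^n;S^n|Z_2^n)\ge\sum_i I(T_{1i};S_i|W_i)$ (because $S^n$ is i.i.d. and independent of $Z_2^n$). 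Then: expanding $nR_2\le H(Y_2^n)-H(Y_2^n|M_2)+n\epsilon_n$ gives the $H(Y_2)-I(T_1;S|W)$ term; expanding $H(M_2)=H(Z_2^n)+H(M_2|Z_2^n)$ first and then bounding $H(M_2|Z_2^n)$ gives the $H(Y_2,Z_2|W)-I(T_1;S|W)$ term if $T_1^n$ is removed from the conditioning and the $H(Y_2,Z_2|T_1,W)$ term if it is kept throughout (as in \eqref{eq:conv3}); and carrying either branch alongside the $M_1$‑bound $H(M_1|Z_2^n,S^n)=H(T_1^n|Z_2^n,S^n)+H(M_1|T_1^n,Z_2^n,S^n)$ — the telescoping $H(T_1^n|Z_2^n,S^n)-H(T_1^n|Z_2^n)=-I(T_1^n;S^n|Z_2^n)$ producing the $-I(T_1;S|W)$ term — yields the two sum‑rate bounds. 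A uniform time‑sharing index $Q$ is then folded into $W$; the factorization $p(w)p(x_2|w)p(x_1|w,s)$ holds because, given $(W_i,S_i)$, the residual randomness in $X_{1i}$ comes from $(M_1,S_{-i})$, which is independent of $X_{2i}$, a function of $M_2$. The cardinality bound on $\mc{W}$, if desired, follows by the convex cover method.

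The step I expect to be the real work is on the achievability side: confirming that the encoder‑1 joint‑typicality step — which must cover $s^n$ using the two codebooks $u_d^n$ and $u_c^n$ while $u_c^n$ is itself superimposed on $w^n$ and carried across blocks — succeeds under the rate budget dictated by the multivariate covering lemma, and that after Fourier--Motzkin elimination the only leftover constraint is the single redundant $R_1$ bound above (no spurious bound on $R_2$ or on $R_1+R_2$ should survive). The converse is essentially bookkeeping once the two displayed facts are in hand.
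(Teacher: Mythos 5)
Your proposal is exactly the proof the paper has in mind: the authors state only that achievability follows by combining the constructions of Sections~\ref{subsec:proof_thm_cap} and \ref{subsec:proof_crib} and that the converse ``combines ideas'' from the two component converses, and that is precisely what you reconstruct, with correct identification of the key converse facts ($H(Y_2^n|M_2)=H(T_1^n|Z_2^n)$ via injectivity and the Markov structure, $H(T_1^n|Z_2^n)\ge I(T_1^n;S^n|Z_2^n)\ge\sum_i I(T_{1i};S_i|W_i)$ using the i.i.d.\ state and $S^n\perp Z_2^n$, the telescoping $H(T_1^n|Z_2^n,S^n)-H(T_1^n|Z_2^n)=-I(T_1^n;S^n|Z_2^n)$, and the factorization $p(w)p(x_2|w)p(x_1|w,s)$). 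One caution on the achievability covering step, which you correctly flag as the real work: since the $u_d^n$ codewords are generated \emph{independently} of $w^n$ (per the footnote in Section~\ref{subsec:proof_crib}), the covering bound on $R_d$ must read $R_d>I(U_d;W,S)$, not the $I(U_d;S|W)$ that a naive ``conditioned on $W$'' reading of \eqref{eq:ach1_state}--\eqref{eq:ach3_state} would give; with $U_d=Y_1$ and the decoder-1 packing bound $R_1+R_d<H(Y_1)$, it is precisely the extra $I(Y_1;W)$ contribution that makes Fourier--Motzkin produce $R_1<H(Y_1|W,S)$ rather than the incorrect $H(Y_1)-I(Y_1;S|W)$.
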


The proof of the converse combines ideas from the converse proofs we have presented for the state-only case and the cribbing-only case.

\section{Discussion}\label{sec:conclude}
Note that there is one difference between the multicoding-based achievability scheme for the canonical interference channel and the achievability schemes for the settings in Sections~\ref{sec:state} and \ref{sec:cribbing}. For the former, the codebook associated with the auxiliary random variable $U$ is used at both receivers during decoding, whereas for the cribbing setup, the codebook associated with auxiliary random variable $U_d$ is only used at the desired receiver, while that associated with the auxiliary random variable $U_c$ is only used at the undesired receiver. One way to understand this dichotomy is to observe similarities with the inner bound for broadcast channel which combines Marton coding and superposition coding, given in \cite[Proposition 8.1]{Gam12} which involves three auxiliary random variables $U_0,U_1,U_2$. Here, the random variable $U_0$ is used at both receivers during decoding, while $U_1$ and $U_2$ are used only at the respective receivers. Now for the deterministic cribbing setup that we have considered, we can think that in the optimal scheme, $U_0$ can be set to be the empty random variable $\phi$, and $U_1$ and $U_2$ correspond to $U_d$ and $U_c$ respectively, i.e., there is no superposition coding, only Marton coding is used (with the distinction from usual Marton coding that the set of $U_c^n$ sequences is not binned). The situation is similar for deterministic and semideterministic broadcast channels where $U_0=\phi$ is optimal too. On the other hand, the Han-Kobayashi scheme employing superposition coding can be thought of as setting $U_2$ to $\phi$, and $U_0$ and $U_1$ correspond to $U$ and $X$ respectively (no Marton coding, only superposition coding). The key observation is that in the Han-Kobayashi scheme, it is not necessary to think of $U$ as encoding a part of the message explicitly, which can be exploited to view the superposition coding instead in a manner resembling Marton coding, as we have shown in Section~\ref{sec:outline} (again with the distinction from usual Marton coding that the set of sequences corresponding to the auxiliary random variable is not binned). This clarifies the dichotomy mentioned at the beginning of this paragraph. Alternatively, we can understand both ways of decoding in a unified manner, as in \cite{Ban12}.

\section*{Acknowledgment}
We gratefully acknowledge discussions with Young-Han Kim, Chandra Nair, Shlomo Shamai and Tsachy Weissman. 

\appendix[Multivariate Packing Lemma]\label{app:multipack}
\begin{lem}
Consider the following four assumptions.
\begin{itemize}
\item[(A)] Let $\{U,X_1,X_2,\dots,X_K,Y\}$ be random variables that have some joint distribution $p_{U,X_1,X_2,\dots ,X_K,Y}$. 
\item[(B)] Let $(\tilde{U}^n,\tilde{Y}^n)\sim p(\tilde{u}^n,\tilde{y}^n)$ be a pair of arbitrarily distributed sequences. 
\item[(C)] For each $j\in [1:K]$, let $\{X_j^n(m_j),\; m_j\in\mc{A}_j\}$, where $|\mc{A}_j|\leq 2^{nR_j}$, be random sequences each distributed according to $\prod_{i=1}^n p_{X_j|U}(x_{ji}|\tilde{u}_i)$. 
\item[(D)] For each $j\in [1:K]$ and each $m_j$, assume that $X_j^n(m_j)$ is pairwise conditionally independent of $\left(\dots ,X_{j-1}^n(m_{j-1}),X_{j+1}^n(m_{j+1}),\dots ,\tilde{Y}^n\right)$ given $\tilde{U}^n$ for all $(\dots ,m_{j-1},m_{j+1}, \dots)$, but arbitrarily dependent on other $X_j^n(\cdot)$ sequences.
\end{itemize} 
Then there exists $\delta(\epsilon)$ that tends to zero as $\epsilon\rightarrow 0$ such that
\begin{IEEEeqnarray*}{LCl}
\msf{P}((\tilde{U}^n,X_1^n(m_1),X_2^n(m_2),\dots, X_K^n(m_K),\tilde{Y}^n)\in\mc{T}^{(n)}_{\epsilon}\\
\quad\quad\quad \text{ for some } m_1\in\mc{A}_1,m_2\in\mc{A}_2,\dots,m_K\in\mc{A}_K)
\end{IEEEeqnarray*}
tends to $0$ as $n\rightarrow\infty$ if \begin{equation}\label{eq:mvpack}\sum_{j=1}^K R_j < \sum_{j=1}^K H(X_j|U)- H(X_1,X_2,\dots ,X_K|U,Y) - \delta(\epsilon).\end{equation}
\end{lem}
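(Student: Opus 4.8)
The plan is to mimic the proof of the classical packing lemma: a union bound over all index tuples, followed by a bound on the probability that a single tuple produces a jointly typical collection of sequences. Let $\mc{E}$ denote the event in the lemma. By the union bound, $\msf{P}(\mc{E})\le\sum_{m_1\in\mc{A}_1}\cdots\sum_{m_K\in\mc{A}_K}\msf{P}\big((\tilde U^n,X_1^n(m_1),\dots,X_K^n(m_K),\tilde Y^n)\in\mc{T}^{(n)}_{\epsilon}\big)$, and since there are at most $2^{n\sum_j R_j}$ tuples it suffices to show that for each fixed $(m_1,\dots,m_K)$,
\[
\msf{P}\big((\tilde U^n,X_1^n(m_1),\dots,X_K^n(m_K),\tilde Y^n)\in\mc{T}^{(n)}_{\epsilon}\big)\;\le\;2^{-n\left(\sum_{j=1}^{K}H(X_j|U)-H(X_1,\dots,X_K|U,Y)-\delta(\epsilon)\right)}.
\]
Given this, $\msf{P}(\mc{E})\le 2^{\,n\left(\sum_j R_j-\sum_j H(X_j|U)+H(X_1,\dots,X_K|U,Y)+\delta(\epsilon)\right)}\to 0$ precisely when \eqref{eq:mvpack} holds, and the $K=1$ case recovers the usual packing lemma since $H(X_1|U)-H(X_1|U,Y)=I(X_1;Y|U)$.

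For the single-tuple bound I would condition on $(\tilde U^n,\tilde Y^n)=(\tilde u^n,\tilde y^n)$. By consistency of strong typicality, only pairs $(\tilde u^n,\tilde y^n)\in\mc{T}^{(n)}_{\epsilon}$ (with respect to $p_{U,Y}$) can be completed to a jointly typical tuple, so the outer average over $(\tilde u^n,\tilde y^n)$ may be restricted to such pairs; in particular $\tilde u^n$ is itself typical. Given $\tilde U^n=\tilde u^n$, assumption (C) makes each $X_j^n(m_j)$ distributed as $\prod_{i=1}^n p_{X_j|U}(\cdot|\tilde u_i)$, and — this is where (D) enters — the sequences $X_1^n(m_1),\dots,X_K^n(m_K)$ are jointly conditionally independent of one another and of $\tilde Y^n$ given $\tilde U^n$ (the conditional-independence statements of (D), applied for $j=1,\dots,K$ in turn, telescope to full joint conditional independence, so the arbitrary distribution of $\tilde Y^n$ causes no difficulty). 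Hence the probability of any fixed completion $(x_1^n,\dots,x_K^n)$ that is jointly typical with $(\tilde u^n,\tilde y^n)$ equals $\prod_{j=1}^K\prod_{i=1}^n p_{X_j|U}(x_{ji}|\tilde u_i)$, which, by conditional typicality of each $x_j^n$ given the typical $\tilde u^n$, is at most $\prod_{j=1}^K 2^{-n(H(X_j|U)-\delta(\epsilon))}$. The number of such completions is at most $2^{\,n(H(X_1,\dots,X_K|U,Y)+\delta(\epsilon))}$, the size of the corresponding conditional typical set. Multiplying the count by the per-completion probability and averaging over the typical $(\tilde u^n,\tilde y^n)$ yields the displayed single-tuple estimate.

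The routine parts are the two counting steps (size of the conditional typical set and the probability estimate for one conditionally typical sequence), which are standard once $\tilde u^n$ is known to be typical. The step I expect to require the most care is the use of (D): one must check that ``each code family's chosen sequence is conditionally independent, given $\tilde U^n$, of the joint of the remaining families together with $\tilde Y^n$, while sequences within a family may be arbitrarily dependent'' is exactly what is needed to factor the single-tuple probability, given $\tilde U^n=\tilde u^n$, as $\prod_{j}\prod_i p_{X_j|U}(x_{ji}|\tilde u_i)$ — no weaker pairwise statement would let the $K$ entropies $H(X_j|U)$ add up, and no stronger hypothesis on the (otherwise arbitrary) law of $\tilde Y^n$ is required. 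Assembling the single-tuple bound with the union bound over the at most $2^{n\sum_j R_j}$ tuples then completes the argument.
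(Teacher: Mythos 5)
Your proposal is correct and follows essentially the same route as the paper: union bound over the (at most $2^{n\sum_j R_j}$) index tuples, then a single-tuple estimate obtained by conditioning on a typical $(\tilde u^n,\tilde y^n)$, factoring the conditional law via (C) and (D), and multiplying a count of the conditional typical set by a per-sequence probability bound. Your explicit remark that the pairwise statements in (D) telescope, by successive marginalization, into the full conditional factorization $p(x_1^n,\dots,x_K^n\mid \tilde u^n,\tilde y^n)=\prod_j p(x_j^n\mid \tilde u^n)$ is a correct and useful clarification of a step the paper invokes implicitly in its equalities $(a)$--$(c)$.
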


\begin{proof}
The proof follows on similar lines as that of the packing lemma in \cite[Ch. 3]{Gam12}. 

Consider a fixed tuple $(\tilde{m}_1,\tilde{m}_2,\dots,\tilde{m}_K)$. The chain of inequalities resulting in \eqref{eq:multipack_proof} bounds the probability of $\left(\tilde{U}^n,\{X_j^n(\tilde{m}_j)\}_{j=1}^K,\tilde{Y}^n\right)$ being jointly typical, where $(a)$, $(b)$ and $(c)$ follow from assumptions (C) and (D). Then we can apply the union bound over all possible tuples $(\tilde{m}_1,\tilde{m}_2,\dots,\tilde{m}_K)$ to get the condition \eqref{eq:mvpack}.

\begin{figure*}[!h]
\normalsize
\hrulefill
\begin{IEEEeqnarray}{LCl}
\msf{P}\left(\left(\tilde{U}^n,\{X_j^n(\tilde{m}_j)\}_{j=1}^K,\tilde{Y}^n\right)\in\mc{T}^{(n)}_{\epsilon}\right) \nonumber\\
 =\sum_{(\tilde{u}^n,\tilde{y}^n)\in\mc{T}^{(n)}_{\epsilon}}p(\tilde{u}^n,\tilde{y}^n)\msf{P}\left(\left(\tilde{U}^n,X_1^n(m_1),X_2^n(m_2),\dots, X_K^n(m_K),\tilde{Y}^n\right)\in\mc{T}^{(n)}_{\epsilon}\Big |\tilde{U}^n=\tilde{u}^n,\tilde{Y}^n=\tilde{y}^n \right)\nonumber\\
 \stackrel{(a)}{=}\sum_{(\tilde{u}^n,\tilde{y}^n)\in\mc{T}^{(n)}_{\epsilon}}p(\tilde{u}^n,\tilde{y}^n)\msf{P}\left(\left(\tilde{u}^n,X_1^n(m_1),X_2^n(m_2),\dots, X_K^n(m_K),\tilde{y}^n\right)\in\mc{T}^{(n)}_{\epsilon}\Big |\tilde{U}^n=\tilde{u}^n \right)\nonumber\\
   \stackrel{(b)}{=} \sum_{(\tilde{u}^n,\tilde{y}^n)\in\mc{T}^{(n)}_{\epsilon}}p(\tilde{u}^n,\tilde{y}^n) \sum_{T^{(n)}_{\epsilon}\left(X_1,X_2,\dots ,X_K|\tilde{u}^n,\tilde{y}^n\right)} p(x_1^n,\dots , x_K^n|\tilde{u}^n)\nonumber\\
\stackrel{(c)}{=} \sum_{(\tilde{u}^n,\tilde{y}^n)\in\mc{T}^{(n)}_{\epsilon}}p(\tilde{u}^n,\tilde{y}^n) \sum_{T^{(n)}_{\epsilon}\left(X_1,X_2,\dots ,X_K|\tilde{u}^n,\tilde{y}^n\right)} \prod_{j=1}^K p(x_j^n|\tilde{u}^n)\nonumber\\
 \leq \sum_{(\tilde{u}^n,\tilde{y}^n)\in\mc{T}^{(n)}_{\epsilon}}p(\tilde{u}^n,\tilde{y}^n) \Big |T^{(n)}_{\epsilon}\left(X_1,X_2,\dots ,X_K|\tilde{u}^n,\tilde{y}^n\right)\Big | \prod_{j=1}^K 2^{-n(H(X_j|U)-\delta(\epsilon))}\nonumber\\
 \leq \sum_{(\tilde{u}^n,\tilde{y}^n)\in\mc{T}^{(n)}_{\epsilon}}p(\tilde{u}^n,\tilde{y}^n) 2^{n(H(X_1,\dots, X_K|U,Y)+\delta(\epsilon))} \prod_{j=1}^K 2^{-n(H(X_j|U)-\delta(\epsilon))}\nonumber\\
 = 2^{n(H(X_1,\dots, X_K|U,Y)+\delta(\epsilon))} \prod_{j=1}^K 2^{-n(H(X_j|U)-\delta(\epsilon))}\label{eq:multipack_proof}.
\end{IEEEeqnarray} 
\hrulefill
\vspace{-2.5mm}
\end{figure*}
\end{proof}

\subsection*{Example 1}
For the case $U=\phi$ and random variables $X_1,X_2,Y$, the condition \eqref{eq:mvpack} can be expressed as follows:
\begin{IEEEeqnarray}{rCl}
R_1+R_2 & < & H(X_1) + H(X_2) - H(X_1,X_2|Y)-\delta(\epsilon)\nonumber\\
& = & I(X_1;X_2) + I(X_1,X_2;Y) -\delta(\epsilon).\label{eq:multipack_2}
\end{IEEEeqnarray}
\subsection*{Example 2}
For the case $U=\phi$ and random variables $X_1,X_2,X_3,Y$, the condition \eqref{eq:mvpack} can be expressed as follows:
\begin{IEEEeqnarray}{rCl}
R_1+R_2+R_3 & < & H(X_1)+H(X_2)+H(X_3) \nonumber\\
&&\quad\quad -\> H(X_1,X_2,X_3|Y)- \delta(\epsilon)\nonumber\\
& = & I(X_1;X_2) + I(X_3;Y) \nonumber\\
&&\quad\quad +\> I(X_1,X_2;X_3,Y) - \delta(\epsilon).\IEEEeqnarraynumspace \label{eq:multipack_3}
\end{IEEEeqnarray}

\bibliographystyle{IEEEtran}
\bibliography{IEEEfull,interference}

\end{document}